\newtheorem{thm}{Theorem}[section] % reset theorem numbering for each chapter
\newtheorem{lem}[thm]{Lemma}
\newtheorem{clm}[thm]{Claim}
\newtheorem{rmk}{Remark}
\newtheorem{cjt}{Conjecture}
\theoremstyle{definition}
\newtheorem{defn}[thm]{Definition}
\newcommand{\Maj}{\text{Maj}}
\newcommand\E{\mathbb{E}}
\newcommand\I{\mathbb{I}}
\newcommand{\hide}[1]{} %hide
\newcommand{\diff}{\text{diff}}
\title{Information Cascades on Arbitrary Topologies}
\titlerunning{Information Cascades on Arbitrary Topologies} %optional, in case that the title is too long; the running title should fit into the top page column
\author[1]{Jun Wan}
\author[1]{Yu Xia\footnote{Part of this work was done when this author visited Microsoft Research Asia.}}
\author[2]{Liang Li}
\author[2]{Thomas Moscibroda}
\affil[1]{The Institute for Theoretical Computer Science(ITCS), Institute for Interdisciplinary Information Sciences, Tsinghua University\footnote{This work was supported in part by the National Basic Research Program of China Grant 2011CBA00300, 2011CBA00301, the National Natural Science Foundation of China Grant 61361136003.}\\
  \texttt{\{wanj12, xiay12\}@mails.tsinghua.edu.cn}}
\affil[2]{Microsoft Research\\
  \texttt{\{liangl, moscitho\}@microsoft.com}}
\authorrunning{J. Wan, Y. Xia, L. Li and T. Moscibroda} %mandatory. First: Use abbreviated first/middle names. Second (only in severe cases): Use first author plus 'et. al.'
\subjclass{F.2.2 Nonnumerical Algorithms and Problems, G.2.2 Graph Theory}% mandatory: Please choose ACM 1998 classifications from http://www.acm.org/about/class/ccs98-html . E.g., cite as "F.1.1 Models of Computation".
\keywords{Information Cascades, Herding Effect, Random Graphs}% mandatory: Please provide 1-5 keywords
\begin{document}

\maketitle

\begin{abstract}
In this paper, we study information cascades on graphs.
In this setting, each node in the graph represents a person.
One after another, each person has to take a decision based on a private signal as well as the decisions made by earlier neighboring nodes.
Such information cascades commonly occur in practice and have been studied in complete graphs where everyone can overhear the decisions of every other player.
It is known that information cascades can be fragile and based on very little information, and that they have a high likelihood of being wrong.

Generalizing the problem to arbitrary graphs reveals interesting insights.
In particular, we show that in a random graph $G(n,q)$, for the right value of $q$, the number of nodes making a wrong decision is logarithmic in $n$.
That is, in the limit for large $n$, the fraction of players that make a wrong decision tends to zero.
This is intriguing because it contrasts to the two natural corner cases: empty graph (everyone decides independently based on his private signal) and complete graph (all decisions are heard by all nodes). In both of these cases a constant fraction of nodes make a wrong decision in expectation. Thus, our result shows that while both too little and too much information sharing causes nodes to take wrong decisions, for exactly the right amount of information sharing, asymptotically everyone can be right.
We further show that this result in random graphs is asymptotically optimal for any topology, even if nodes follow a globally optimal algorithmic strategy. Based on the analysis of random graphs, we explore how topology impacts global performance and construct an optimal deterministic topology among layer graphs.
\end{abstract}

\section{Introduction}\label{sec:intro}

An Information Cascade occurs when a person observes the actions of others and then
	---~in spite of possible contradictions to his/her own private information
	---~follows these same actions.
A cascade develops when people
	``abandon their own information in favor of inferences based on earlier people's actions''\cite{easley2010networks}.
Information Cascades frequently occur in everyday life.
Commonly cited examples include the choice of restaurants when being in an unknown place
	people choose the restaurant that already has many guests over a comparatively empty restaurant,
or hiring interview loops where interviewers follow earlier interviewer's decisions if they are not sure about the candidate.
Notice that information cascades are not irrational behavior;
	on the contrary,
	they occur precisely because people rationally decide based on inferences derived from earlier people's actions.

The simple herding experiment by Anderson and Holt illustrates Information Cascades \cite{anderson1996classroom, anderson1997information}(see also Chapter 16 in \cite{easley2010networks}).
In this experiment, an urn contains three marbles,
	either two red and one blue (majority red),
	or one red and two blue (majority blue).
The players do not know whether the urn is majority red or blue.
One by one, the players privately pick one marble from the urn,
	check its color, return it to the urn,
	and then publicly announce their guess as to whether the urn is majority red or majority blue.
The first and second player will naturally base their guesses on the colors of the marble they picked,
	thus their guesses reveals their private signals.
For any subsequent player however,
	her rational guess may not reflect her own signal.
For example, suppose the first two players both guess \emph{red}. In this case,
	it is rational for the third player to also guess \emph{majority-red} regardless of the color of the marble she picked.
Indeed, the third player makes her decision on a rational inference based on the first two guesses.
Since her guess does therefore not reveal any further information about the urn to any subsequent player,
	\emph{every} subsequent player will guess the urn to be \emph{majority-red}.
The example shows that information cascades can be based on very little actual information and thus fragile;
	and they can be wrong.
Indeed, in the above example with urns,
	it can be shown that with probability $1/5$, a ``wrong cascade'' occurs,
	i.e., all players (except from possibly a few at the beginning) will guess wrongly.

The standard model for information cascades studies the process in which
	players make decisions sequentially based on their own private signals
	as well as the set of decisions made by earlier players~\cite{banerjee1992simple, bikhchandani1992theory, welch1992sequential}.
In this paper, we interpret and generalize the traditional information cascade setting as a game in a graph.
Each player is a node,
	and an edge between two nodes $v$ and $w$ means that $w$ can hear about $v$'s guess
	(assuming $w$ is after $v$ in the order of decision-making).
Thus the traditional information cascade model corresponds to a complete graph
	(all players hear the decisions of all other players).
At the other end of the spectrum,
	the empty graph means that every player decides independently of all other players, purely based on their own private signal.
Casting the information cascade problem in this graph setting
	allows us to study the range in between the two extreme points of complete and empty graphs.

\begin{figure}
	\begin{minipage}{2in}
    		\includegraphics[scale = 0.55]{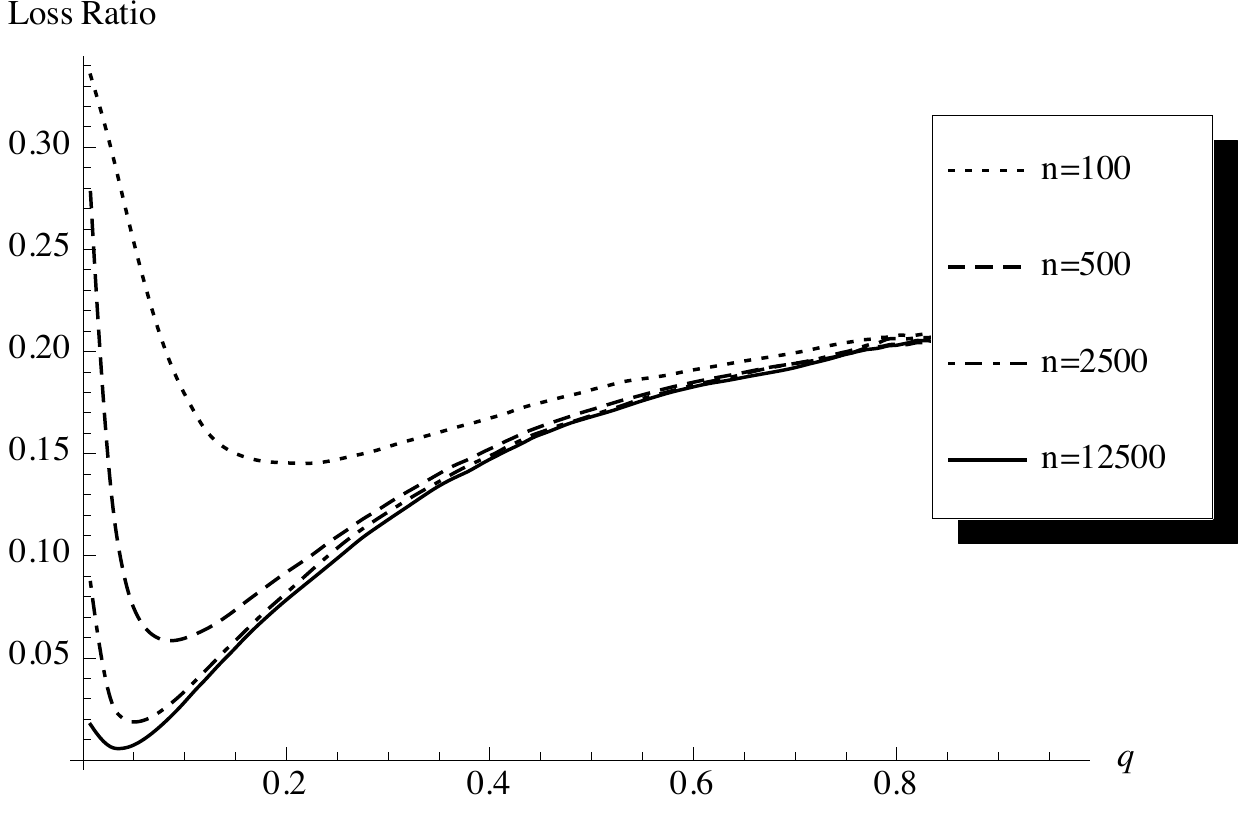}
		\caption{Performance of random graphs for different $q$ and $n$.}
		\label{fig:random}
	\end{minipage}
	\hspace{1in}
	\begin{minipage}{2in}
		\includegraphics[scale = 0.55]{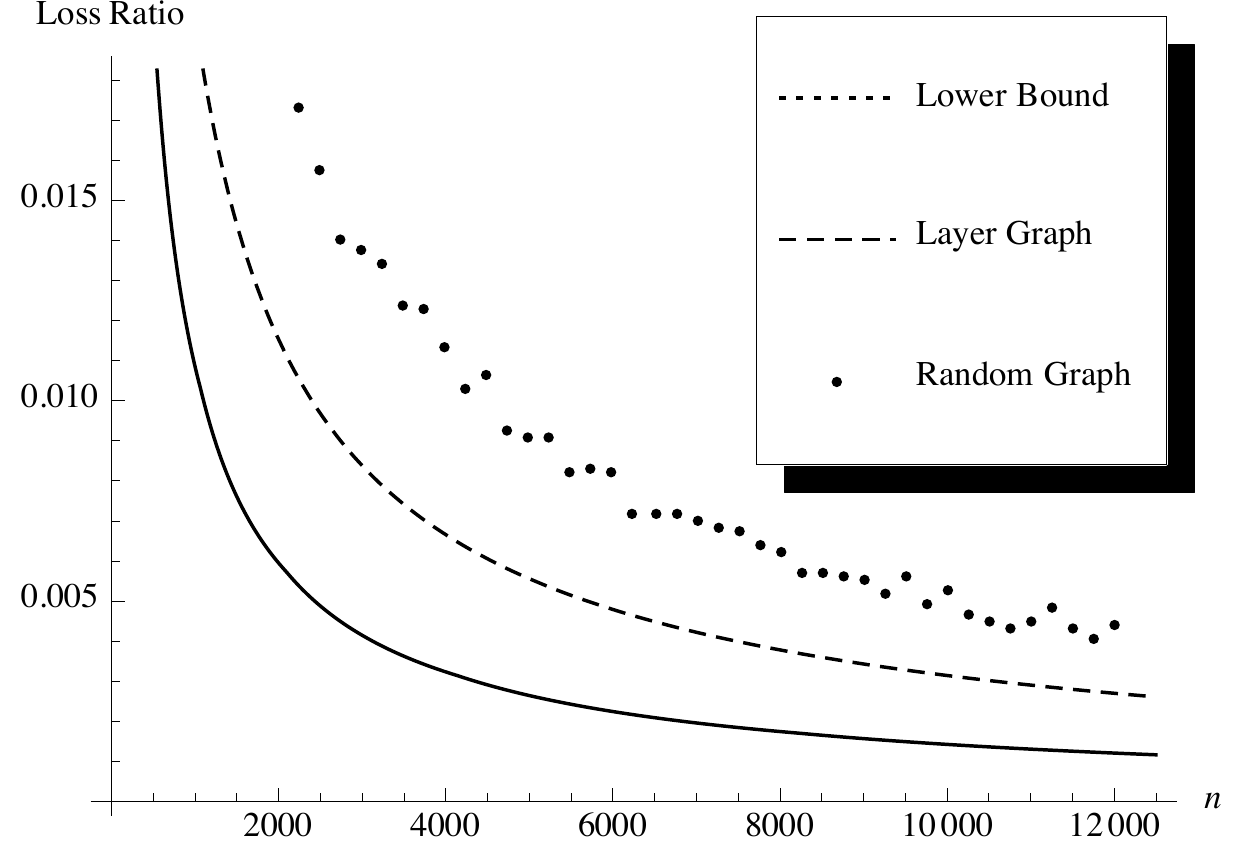}
		\caption{Performance of different topologies and strategies.}
		\label{fig:simulation}
	\end{minipage}
\end{figure}

Studying this range in between reveals fascinating insights.
Figure \ref{fig:random} shows the expected number of wrong guesses
	in the above 3-marble-urn experiment in a random graph $G(n,q)$ topology,
	for different values of $n$ and $q$.
In the empty graph ($q=0$), if all nodes take their decisions independently, $1/3$ of the players are wrong.
In the complete graph ($q=1$), $1/5$ of the players are wrong on average as discussed above.
However, the interesting thing is that for some values in between these two extremes,
	the number of wrong decisions is significantly less.
Indeed, it seems that for the right value of $q$ and $n\rightarrow\infty$, the number of wrong decisions tends to $0$.

These observations are intriguing:
	It looks like that if people share too much information,
	a constant fraction of the population is wrong because of bad information cascades occurring.
If people share too little information,
	a larger constant fraction of the population is wrong because the players take their decisions too independently, relying too much on their private signal which has a constant probability of being wrong.
But, if exactly the right amount of information is shared,
	then it seems that in the limit, \emph{all players} (at least asymptotically) take the correct decision.

In this paper, we study this phenomenon.
We prove that, indeed, in a random graph the number of wrong nodes is at most $O(\log n)$ for the optimal value of $q$ (Section 3).
We then study arbitrary graph topologies and show that $O(\log n)$ wrong nodes is optimal in a strong sense (Section 4).
Specifically, even in the best possible topology, there are at least $\Omega(\log n)$ wrong nodes.
This result holds even if a global oracle tells each node whether it should
	a) base its decision solely on its private signal (thus revealing this signal as additional information to all its neighbors) or
	b) base its decision on the majority of private signal and neighboring decisions as in the cascade model above.
In other words, even if nodes can ``sacrifice'' themselves to reveal additional information to their neighbors
	and even in the best possible topology $\Omega(\log n)$ wrong nodes is a lower bound.
Finally, we derive an optimal deterministic topology from among a family of \emph{layer graphs} (Section 5).

\section{Related Work}
Sequential decision-making has been studied in various areas including politics, economics and computer science\cite{Arthur1989, banerjee1992simple, bikhchandani1992theory,Granovetter1978,easley2010networks}. The primary concern on the Bayesian learning model\cite{bikhchandani1992theory, Smith99pathologicaloutcomes, banerjee1992simple, welch1992sequential, Smith2008rational, word_of_mouth2004, BayesianLearning} is under what conditions asymptotically correct information cascades occur. For specific graph topologies such as complete graphs and line graphs, conditions on the private signals were addressed to guarantee the correctness of cascades\cite{Smith99pathologicaloutcomes, observationalLearning}. For arbitrary graph topologies, the approach of Acemoglu et al.\cite{BayesianLearning} is intuitively quite consistent to our $k$-layer topology(see Section \ref{sec:k layer}) and can be used to explain why our random network and selfishless decision-making algorithm achieve global optimality. While their approach focuses on the asymptotic probability of correct cascades, our result can quantitatively bound the expectation number of incorrect nodes.

% \ref{} and ~\ref{} show that on the complete graph topology, only with unbounded private signals can we ensure the correctness.~\ref{Celen and Kariv} study the line graph topology in which an individual observes only her immediate predecessor.~\ref{Cellander and Horner} show the possible optimality to follow the previous decision-makings that deviate from past average choices. Based on the above work, ~\ref{} first studies the Bayesian model on arbitrary graph topology. Their intuition for achieving asymptotically correct information cascades is quite consistent to our $k$-layer topology(see Section ~\ref{}) and can be used to explain why our random network and selfishless decision-making algorithm can achieve global optimal. The difference is that our analysis can quantitatively give the bounds for the social loss and social welfare.
There has been research on different sequential decision making models in graphs. For example, Chierichetti et al. \cite{Chierichetti2012} study different algorithms for finding appropriate orderings to maximize the fraction making correct decisions and Hajiaghayi et al.\cite{Hajiaghayi2013} and Hajiaghayi et al.\cite{Hajiaghayi2014} generalizes the model and improve related bounds. However, notice that the threshold decision-making processes studied in these works fundamentally differ from the information cascade setting we consider in this work. Indeed, the effect of too little/too much information sharing being bad as shown in Figure \ref{fig:random} is not observed in such threshold models.

%Chierichetti et al. study a different sequential model with threshold decision-making function. In their work, different algorithms were proposed for finding appropriate orderings to maximize social welfare. Hajiaghayi et al.~\cite{} and Hajiaghayi et al.~\cite{} generalizes the model to different levels and improve the bound of social welfare. The \emph{polarization} effect is a fundamental property in all these settings, which serves to strengthen the majority and weaken the minority. Our approach also have similar properties as depicted in Section\ref{}.

%
%Hajiaghayi et al. consider the same problem on the complete graph topology with different individual thresholds. Hajiaghayi et al. generalizes the above results to broader classes of decision-making functions and make the private signals correlated. They also improve the bound of social welfare in ~\cite{Chierichetti}. The \emph{polarization} effect is a fundamental property in all these settings, which serves to strengthen the majority and weaken the minority. Our approach also have similar properties as depicted in Section\ref{}.
There also exists an impressive body of work on sequential and non-sequential decision on arbitrary graphs that however do not capture information cascades as exemplified in Anderson and Holt's herding experiment. Typically, each node updates its opinion through repeated averaging with neighbors. General conditions for convergence to consensus have been developed\cite{Acemoglu2010, DeGroot, Golub07naivelearning}. Intrigued by the observation that consensus is usually not reached  in real world\cite{Krackhardt47}, Bindal et al.\cite{Bindel11} use a game theoretic approach to study the equilibrium of the dynamical process and measure the cost of disagreement via the Price of Anarchy\cite{Koutsoupias1999}.

\section{Preliminaries}\label{sec:prel}

We introduce the formal definitions of our model.
There are $n$ nodes (numbered $1,2,\cdots n$) whose neighboring relationship is depicted by a graph $G = ([n],E)$.
All nodes make decisions sequentially according to their numbers in order to guess a global ground truth value $b\in\{0,1\}$.
When making its decision, each node can only obtain a random partial information on $b$.
That is, when node $i$ observes $b$,
	it can only get a \emph{private signal} $s_i$ which equals $b$ with probability $p>0.5$ or equals $1-b$ with probability $1-p$.
The decision-making of a node not only depends on its private signal observed from $b$,
	but also on the decisions made by its previous neighbors. Note that the neighboring decisions may or may not be based on those nodes' private signals.
More formally, let $c_i$ be the \emph{output decision} or \emph{guess} of node $i$ and $c^i$ be the \emph{decision vector} $(c_1, c_2, \cdots, c_i)$,
	if $ L_i : \{0,1\}^{i-1}\times \{0,1\} \to \{0,1\} $ is the \emph{decision-making algorithm} for node $i$,  we have in general $c_i = L_i(c^{i-1}, s_i)$.

Given the graph $G$ and decision-making algorithms $L_1, L_2, \cdots, L_n$,
	we use $\mathcal{E}_G(L_1, \dotsb, L_n)$ to denote the expected number of nodes that output the wrong value $1 - b$.
When it is clear from the context, we may abbreviate this notation to $\mathcal{E}_G$,
	$\mathcal{E}(L_1, \dotsb, L_n)$, or simply $\mathcal{E}$.
The global objective of this sequentially decision-making process is to minimize $\mathcal{E}_G(L_1, \dotsb, L_n)$,
	which is equivalent to maximizing the expected number of nodes who guess the ground truth value correctly.
We will show that such an optimization task can be achieved by adjusting the graph topology or the decision-making algorithms.

Let $\mathcal{E}_i$ be the \emph{failure probability} that node $i$ outputs $1 - b$.
As a node often makes inferences based on others' decisions without knowing their private signals,
	it is intuitively understandable that a node's probability of correct decision-making can be quantified by the number of private signals it can infer.

In reality, the Majority Algorithm is one of the most popular and practical algorithms for decision-makings.
This kind of ``following the herd'' algorithm can often achieve a locally optimal effect. In this paper, we use $\Maj_k$ to denote the Majority Algorithm taking input bits of length $k$. We just use $\Maj$ if $k$ is clear from the context.
In Chapter 16 of \cite{easley2010networks}, Easley and Kleinberg shows that the Majority Algorithm is optimal when a node observes multiple independent signals.
\begin{clm}
	\label{clm:multiple is optimal}
	For any node $i$ seeking to maximize $\mathcal{E}_i$,
		when it observes multiple signals(including its own private signal), its optimal algorithm is to output the majority of these observed signals.
\end{clm}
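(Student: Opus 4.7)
\medskip

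\noindent\textbf{Proof plan.} The claim is a standard Bayes-optimal hypothesis test, so the plan is to set up the posterior over $b \in \{0,1\}$, compute the likelihood ratio, and verify that the MAP decision coincides with majority voting. I will assume the (implicit) uniform prior $\Pr[b=0] = \Pr[b=1] = 1/2$, under which minimizing the failure probability $\mathcal{E}_i$ is equivalent to choosing the $b$ that maximizes the posterior $\Pr[b \mid \text{observed signals}]$.

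\smallskip

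\noindent\textbf{Step 1: reduce to a likelihood-ratio computation.} Let $t_1, t_2, \ldots, t_k \in \{0,1\}$ be the independent signals observed by node $i$ (including its own private signal), each equal to $b$ with probability $p > 1/2$ and to $1-b$ with probability $1-p$. Let $k_1 = |\{j : t_j = 1\}|$ and $k_0 = k - k_1$. A direct computation gives
\[
\frac{\Pr[b=1 \mid t_1,\ldots,t_k]}{\Pr[b=0 \mid t_1,\ldots,t_k]}
\;=\; \frac{\Pr[t_1,\ldots,t_k \mid b=1]}{\Pr[t_1,\ldots,t_k \mid b=0]}
\;=\; \left(\frac{p}{1-p}\right)^{k_1 - k_0},
\]
using the uniform prior and independence of the signals.

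\smallskip

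\noindent\textbf{Step 2: identify the MAP rule with majority.} Since $p > 1/2$, the base $p/(1-p) > 1$, so the posterior ratio exceeds $1$ iff $k_1 > k_0$ and is less than $1$ iff $k_1 < k_0$. Hence the MAP decision is exactly $\Maj(t_1,\ldots,t_k)$ whenever the count is not tied; if $k$ is even and $k_1 = k_0$ the two hypotheses are equiprobable, so any tie-breaking rule (in particular the one used by $\Maj_k$) is equally optimal and contributes the same value to $\mathcal{E}_i$.

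\smallskip

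\noindent\textbf{Step 3: optimality.} For any deterministic decision rule $f: \{0,1\}^k \to \{0,1\}$, the failure probability decomposes as
\[
\Pr[f(t_1,\ldots,t_k) \neq b] = \sum_{\vec t} \Pr[\vec t] \cdot \Pr[b \neq f(\vec t) \mid \vec t].
\]
This sum is minimized term-by-term by picking $f(\vec t)$ to be the value of $b$ with larger posterior, which by Step 2 is the majority vote. Randomized rules cannot improve on this since they are convex combinations of deterministic ones.

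\smallskip

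\noindent\textbf{Expected obstacles.} The argument is essentially a one-line application of the Neyman--Pearson/MAP principle, so there is no serious technical obstacle. The only subtleties are (i) justifying the uniform prior assumption on $b$ (standard in the Anderson--Holt setup and implicit in the paper) and (ii) cleanly handling even-$k$ ties, both of which are dispatched in Steps 2--3 above.
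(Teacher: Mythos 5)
Your proof is correct and follows essentially the same route the paper takes: the paper defers this claim to the Bayesian likelihood-ratio argument of Easley--Kleinberg and reuses exactly that computation (the posterior ratio $\bigl(p/(1-p)\bigr)^{n_0-n_1}$ under a uniform prior) in its proof of Lemma~\ref{lem:selfMaj}. Your handling of ties and the reduction to deterministic rules are fine additions but do not change the substance.
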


However, Anderson and Holt's experiment shows that if all nodes apply the Majority Algorithm,
	it is possible that essentially all of the nodes guess incorrectly,
	leading to an information cascade on the wrong side.
%This implies a contradiction between the global optimality of the entire graph and local myopic decision-makings.
In this paper, we address this problem and analyze the impact of topology and algorithms on information cascades.

\section{Random Graphs}
In this section, we analyze the performance of the Majority Algorithm on random graphs.
Conventionally, $G(n, q)$ denotes the random graph model that generates a random graph with $n$ nodes and each pair of nodes are connected by an edge with probability $q$.
Different connection probabilities can result in completely different topologies, and thus dramatic changes in $\mathcal{E}_{G(n,q)}$.
As introduced in Section \ref{sec:intro}, when $q$ equals 0 or 1 corresponding to the empty or complete topology, the expected number of wrong output decisions are both $\Theta(n)$.

In this section, we show that there exists a  $q$ such that the Majority Algorithm can achieve only $\Theta(\log n)$ expected wrong output decisions:

\begin{thm}
\label{thm:random graph log n reachable}
There exists a connection probability $q = \Theta(1/\log n)$ such that in $G(n, q)$, when all nodes apply the Majority Algorithm, we have $\mathcal{E}_{G(n,q)} = \Theta(\log n)$.
\end{thm}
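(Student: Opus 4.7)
My plan is to choose $q = c/\log n$ for a sufficiently large constant $c > 0$ and then prove matching upper and lower bounds on $\mathcal{E}_{G(n,q)}$.

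For the \textbf{lower bound}, I would observe that for every node $i \leq i_0 = \Theta(\log n)$ the expected number of previous neighbors is $(i-1)q = O(1)$, so with at least a constant probability node $i$ has \emph{no} previous neighbor; in that case its output equals its private signal $s_i$ and is wrong with probability $1-p = \Theta(1)$. Summing $\mathcal{E}_i = \Omega(1)$ over $i \leq i_0$ gives $\mathcal{E}_{G(n,q)} \geq \Omega(\log n)$.

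For the \textbf{upper bound}, I would split $\{1,\ldots,n\}$ into an \emph{early} range $i \leq K \log n$ and a \emph{late} range $i > K \log n$, for a large constant $K$. The early range contributes at most $K \log n = O(\log n)$ trivially. For a late node my goal is to show $\mathcal{E}_i \leq \exp(-\Omega(iq))$; then by geometric summation $\sum_{i > K \log n} \exp(-\Omega(i/\log n)) = O(\log n)$. The key observation is that the potential edges from node $i$ back to earlier nodes are independent of the history $\mathcal{F}_{i-1}$ that produced $c_1,\ldots,c_{i-1}$. Hence, conditional on $\mathcal{F}_{i-1}$, the number $X$ of wrong previous neighbors of $i$ is $\mathrm{Bin}(w_{<i}, q)$ and the number $Y$ of correct previous neighbors is $\mathrm{Bin}(i-1-w_{<i}, q)$, independently, where $w_{<i} = |\{j<i : c_j \neq b\}|$. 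If $w_{<i}/(i-1) \leq 1/2-\delta$ for a constant $\delta>0$, then $\E[Y-X] = \Omega(iq)$ and a Chernoff bound yields $\Pr[X \geq Y-1 \mid \mathcal{F}_{i-1}] \leq \exp(-\Omega(iq))$; the independent private signal $s_i$, being correct with probability $p>1/2$, only helps. Hence $\mathcal{E}_i$ is well-controlled on the event that the wrong predecessors form a strict minority.

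The hard part is showing that this minority event holds with overwhelming probability, so that $\Pr[w_{<i} > (1/2-\delta)(i-1)]$ can be absorbed into $\exp(-\Omega(iq))$. Because cascades induce positive correlations among the indicators $\I[c_j = 1-b]$, Chernoff does not apply directly to $w_{<i}$. An inductive bound $\E[w_{<i}] = \sum_{j<i}\mathcal{E}_j = O(\log n)$ combined with Markov gives only $\Pr[w_{<i} > (i-1)/4] = O(\log n / i)$, which summed over the late range yields $O(\log^2 n)$, one logarithmic factor too many. To close this gap I would take one of two routes: (i) a Doob/Azuma concentration inequality for $w_{<i}$ with respect to the independent signals and edges, showing that resampling a single coordinate flips only $O(1)$ downstream decisions on average; or (ii) exploit the locally tree-like structure of $G(n,c/\log n)$, where with high probability the decisions of the neighbors of node $i$ depend on essentially disjoint sets of private signals and therefore behave as nearly independent Bernoullis, so that a clean Chernoff bound can be applied directly to the neighbors' votes rather than to $w_{<i}$. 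Route (ii) looks the cleanest and most directly aligned with the sparse random-graph setting.
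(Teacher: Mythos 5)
Your lower bound and the overall architecture of your upper bound (the deferred-decision independence of node $i$'s back-edges, the conditional Chernoff bound giving $\mathcal{E}_i \le e^{-\Omega(iq)}$ on the event that the wrong predecessors form a strict minority, and the geometric summation) match the paper, and you have correctly isolated the crux: bounding $\Pr[w_{<i} > (1/2-\delta)(i-1)]$ tightly enough that its sum over $i$ is $O(\log n)$. However, neither of your proposed routes closes this gap. Route (i) fails because Azuma/McDiarmid needs worst-case bounded differences, and here flipping a single early private signal can flip an entire cascade, i.e.\ $\Theta(n)$ downstream decisions; ``$O(1)$ influence on average'' does not feed into those inequalities. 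Route (ii) is simply false in the late range: with $q=\Theta(1/\log n)$ a node $i=\Theta(n)$ has $\Theta(n/\log n)$ previous neighbours, and once a cascade has formed their votes are all determined by the same small set of early signals, so they are far from independent; the graph is only locally tree-like among the first $O(\log n)$ nodes.

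The paper's resolution (Lemmas \ref{lem:random general bound}, \ref{lem:majority among first logn/q} and \ref{lem:loss of former nodes}) is a block induction over a \emph{longer} early range of $\log n/q=\Theta(\log^2 n)$ nodes, split into $\Theta(\log n)$ segments of $\Theta(1/q)$ nodes each. The first segment consists mostly of isolated nodes whose outputs are independent private signals, so Chernoff gives a $\delta$-correct majority with probability $1-O(n^{-1})$; inductively, if all previous segments are $\delta$-majority-correct then, by Lemma \ref{lem:random general bound} (which uses only the independence of back-edges), every node of the next segment is correct with probability at least $p$ conditionally on any admissible history, so its indicators dominate independent Bernoulli$(p)$ variables and Chernoff applies again; a union bound over the $\Theta(\log n)$ segments gives majority-correctness of the whole $\Theta(\log^2 n)$-prefix with probability $1-O(n^{-1}\log n)$, after which $e^{-\Theta(iq)}=n^{-\Theta(1)}$ and a union bound over the remaining nodes kills all later failures (Lemma \ref{lem:first logn/q deciding}). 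Two further repairs you would need: your early range of $K\log n$ nodes is too short --- at its end $e^{-\Theta(iq)}$ is still a constant, so you cannot yet argue that the prefix remains majority-correct thereafter --- and once the early range is lengthened to $\Theta(\log^2 n)$, the bound ``early range contributes trivially'' gives only $\Theta(\log^2 n)$, so you also need the paper's Lemma \ref{lem:loss of former nodes}: conditioned on majority-correctness, the per-node failure probabilities inside the early range themselves decay like $e^{-\Theta(iq)}$ and sum to $O(1/q)=O(\log n)$.
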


%In this section, we demonstrate the optimal choice of connection probability and calculate the corresponding $\mathcal{E}_{G(n,q)}$.
	
%For convenience, denote $G(n, q)$ as the random graph with $n$ nodes and connection probability $q$.
%If not otherwise stated, the decision-making algorithm used in this section is the Majority Algorithm.
%Therefore, we may abbreviate the expected number of wrong outputs simply as $\mathcal{E}_{G(n,q)}$.
%The two main theorems in this section is presented as below.

We can also demonstrate the optimality of this connection probability by showing a lower bound for the expected number of wrong decisions:

\begin{thm}
\label{thm:random graph log n bound}
For any connection probability $q$, when all nodes apply the Majority Algorithm, the expected number of wrong outputs in $G(n,q)$ is lower bounded by $\Theta(\log n)$, i.e. $\mathcal{E}_{G(n,q)} = \Omega(\log n)$.
\end{thm}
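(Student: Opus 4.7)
The strategy is to split according to the magnitude of $q$, using a back-degree argument for small $q$ and a cascade argument for the rest. The common ingredient is that whenever node $i$ has no back-neighbor in the realized graph, $c_i = s_i$ and so $\Pr[c_i\neq b] = 1-p$. In $G(n,q)$ this happens with probability $(1-q)^{i-1}$, so pointwise $\mathcal{E}_i \geq (1-p)(1-q)^{i-1}$, and summing yields
\[
  \mathcal{E}_{G(n,q)} \;\geq\; (1-p)\cdot \frac{1-(1-q)^n}{q}.
\]

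For the small-$q$ regime ($q\leq 1/\log n$) this bound alone is already $\Omega(\log n)$ by elementary estimates: when $qn$ is large, $1-(1-q)^n$ is $\Omega(1)$ while $1/q \geq \log n$; when $qn$ is $O(1)$, nearly all $n$ nodes are isolated from their predecessors, contributing $\Omega(n)$ expected errors.  So the first step is to write down the bound above and verify these estimates.

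For $q > 1/\log n$ the isolated-node bound degrades (toward a constant as $q\to 1$), so a cascade argument is required. I would condition on the event $E = \{s_1 = s_2 = 1-b\}$, of probability $(1-p)^2$, under which the Majority Algorithm forces $c_1 = c_2 = 1-b$; then for each later node $i$ I would identify events on the random back-neighborhood $N_i^- \subseteq [i-1]$ under which the Majority of neighbors' decisions together with $s_i$ is forced to be $1-b$ as well (for instance $N_i^- \subseteq \{1,2\}$ with both edges present, or more generally $N_i^-$ concentrated in an early wrong cascade $\{1,\dots,k\}$). Summing the resulting contributions over $i$ and over cascade-seed sizes $k$ should yield $\Omega(\log n)$ additional expected wrong decisions and close out this regime.

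\emph{Main obstacle.} The cascade step is the technical heart. The simplest cascade event, $N_i^- = \{1,2\}$, only contributes $O(q)$ when summed over $i$, which is well below $\log n$ once $q \gg 1/\log n$. Closing this gap requires aggregating contributions over many cascade-seed sizes, and crucially exploiting that in a sparse random DAG the back-neighbors of a single node have strongly correlated decisions through shared ancestors. This correlation, absent in the independent-signal setting of Claim~\ref{clm:multiple is optimal}, is precisely what keeps each early node's error probability bounded below by a positive constant despite its growing back-degree; making it quantitative---and thereby showing that each of the first $\Theta(\log n)$ nodes has $\mathcal{E}_i = \Omega(1)$---is where I expect the bulk of the proof's effort to concentrate.
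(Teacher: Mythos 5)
Your small-$q$ case is correct and is essentially the paper's argument: the paper also lower-bounds $\mathcal{E}_i$ by the probability that node $i$ is (near-)isolated from its predecessors, getting $\Omega(1/q)$ for $q = O(1/\log n)$; your zero-back-neighbor version $\mathcal{E}_i \ge (1-p)(1-q)^{i-1}$ and the resulting sum $(1-p)(1-(1-q)^n)/q$ are fine.

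The large-$q$ case, however, has a genuine gap, and the route you sketch will not close it. Your stated target --- that each of the first $\Theta(\log n)$ nodes has $\mathcal{E}_i = \Omega(1)$ --- is false in this regime: take $q = 1/\sqrt{\log n}$; a node $i$ with $1/q \ll i \le \log n$ has $\Theta(iq) \gg 1$ back-neighbors, and its failure probability is dominated by the probability that a wrong cascade formed among the first $\Theta(1/q)$ nodes, which is roughly $(1-p)^{\Theta(1/q)} = o(1)$. Even the weaker true statement ($\mathcal{E}_i = \Omega(1)$ for $i \le 1/q$) only yields $\Omega(1/q) = o(\log n)$ here. Likewise your seed event $\{s_1=s_2=1-b\}$ of constant probability cannot work, since it cannot force more than $O(q\cdot n \cdot \text{(something small)})$ later nodes wrong when $q$ is small but $\omega(1/\log n)$, and the correlation analysis you anticipate is unnecessary. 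The idea you are missing is to take the seed to be \emph{all} of the first $\Theta(1/q)$ private signals being wrong. This event has probability $(1-p)^{\Theta(1/q)}$, which is vanishing but equal to $n^{-o(1)}$ precisely when $q = \omega(1/\log n)$; it deterministically forces $c_1 = \dotsb = c_{\Theta(1/q)} = 1-b$ under the Majority Algorithm (every node then sees only wrong decisions and a wrong signal); and by Lemma \ref{lem:random exponential bound} applied in the ``wrong cascade'' direction, each subsequent node is then correct with probability at most $e^{-\Theta(iq)}$, so all but $O(1/q)$ of the remaining nodes are wrong in expectation. This gives $\mathcal{E} \ge (1-p)^{\Theta(1/q)}\cdot\Theta(n) = n^{1-o(1)} = \omega(\log n)$, which is how the paper closes this case.
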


A key ingredient to our proof is to bound the failure probability $\mathcal{E}_i$ for each node. Applying the Chernoff Bound and the Union Bound, we can further bound the overall $\mathcal{E}_{G(n,q)}$. The following are two technical lemmas for bounding the failure probabilities(proofs in Appendix \ref{subsec:random exponential bound} and Appendix \ref{subsec:random general bound}):
%If the expected correct probability for each node can be effectively bounded,
%	we may apply Chernoff Bound and Union Bound to estimate the overall $\mathcal{E}_{G(n,q)}$.
%The following are two technical lemmas that fulfill this goal.
%They are essential to the proof of Theorem \ref{thm:random graph log n reachable} and \ref{thm:random graph log n bound}.

\begin{lem}
\label {lem:random exponential bound}
If a constant fraction $f > 0.5$ of the first $i$ nodes are correct (resp. wrong),
	node $i+1$'s failure (resp. correct) probability $\mathcal{E}_{i+1}$ is upper bounded by $e^{-\Theta(iq)}$.
\end{lem}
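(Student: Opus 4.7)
The plan is to condition on which of the first $i$ nodes output correctly and which incorrectly, and then analyze node $i+1$'s Majority vote purely as a function of its random neighborhood and its private signal. Formally, let $A\subseteq[i]$ be the set of correct nodes and $B=[i]\setminus A$ the wrong ones, with $|A|\ge fi$ by hypothesis. Under the $G(n,q)$ model, each $j\in[i]$ becomes a neighbor of $i+1$ independently with probability $q$, so conditional on $(A,B)$ the counts $C$ of correct neighbors and $W$ of wrong neighbors are independent binomials $\mathrm{Bin}(|A|,q)$ and $\mathrm{Bin}(|B|,q)$.

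Next I would translate node $i+1$'s failure event into a deviation statement about $C-W$. Since node $i+1$ applies Majority to its private signal $s_{i+1}$ together with the $C+W$ neighbor bits, the event ``$i+1$ outputs $1-b$'' is contained in $\{\,W+\mathbf 1[s_{i+1}\ne b]\;\ge\;C+\mathbf 1[s_{i+1}=b]\,\}$, which in particular forces $W-C\ge -1$ regardless of how $s_{i+1}$ falls. Because $\mathbb E[C]-\mathbb E[W]=(|A|-|B|)\,q\ge(2f-1)\,iq$ and $\delta:=2f-1$ is a positive constant by assumption, the task reduces to showing $\Pr[\,C-W\le 1\,]\le e^{-\Theta(iq)}$.

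A standard multiplicative Chernoff bound applied separately to $C$ (mean at least $fiq$) and to $W$ (mean at most $(1-f)iq$), followed by a union bound, then yields
\[
\Pr[\,C-W\le 1\,]\;\le\;\Pr[\,C\le(1-\delta/2)\,fiq\,]+\Pr[\,W\ge(1+\delta/2)(1-f)iq\,]\;\le\;2e^{-\Theta(iq)}
\]
whenever $iq$ exceeds a fixed constant, since in that regime the complementary event $\{C\ge(1-\delta/2)fiq\}\cap\{W\le(1+\delta/2)(1-f)iq\}$ forces $C-W\ge(\delta/2)\,iq>1$. When $iq=O(1)$ the claimed bound $e^{-\Theta(iq)}$ is already $\Theta(1)$ and holds trivially. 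The symmetric ``resp.'' statement (a constant fraction $f>0.5$ being wrong, bounding the probability that $i+1$ is correct) follows by swapping the roles of correct and wrong and running the same computation verbatim.

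The main obstacle I anticipate is the careful bookkeeping of the $\pm1$ additive slack produced by $i+1$'s own private signal together with Majority's tie-breaking convention; these shifts must be absorbed into the Chernoff deviation without spoiling the $\Theta(iq)$ exponent, which is exactly what forces the choice of multiplicative deviation to be a fixed fraction $\delta/2$ of $\delta$ rather than a vanishing quantity. A secondary subtlety is justifying that we may condition on a worst-case correctness pattern of the first $i$ nodes: this is legitimate because the neighbor-selection randomness of node $i+1$ and its private signal $s_{i+1}$ are independent of everything used to produce $c_1,\dots,c_i$, so the bound holds uniformly over all patterns with $|A|\ge fi$.
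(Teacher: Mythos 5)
Your proof is correct and follows essentially the same route as the paper: both reduce node $i+1$'s failure to the event that the binomial count of wrong neighbors nearly matches the count of correct neighbors (up to the $\pm 1$ slack from the private signal), using that the edge choices of node $i+1$ are independent of the earlier decisions, and then apply a Chernoff-type concentration bound. The only difference is cosmetic: the paper bounds the moment generating function of the signed difference directly and optimizes the parameter $t$ to extract the explicit exponent $e^{-qi(\sqrt{f}-\sqrt{1-f})^{2}}$ (a constant it reuses in later lemmas), whereas you apply two one-sided multiplicative Chernoff bounds plus a union bound, which gives the same $e^{-\Theta(iq)}$ rate.
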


%This is a very intuitive lemma.
%Node $i+1$ expects to witness $iq$ previous signals, each of them are correct with approximately $f$ probability.
%Chernoff Bound would then provide an upper bound which is exponential to $-\Theta(iq)$.
%In Appendix \ref{subsec:random exponential bound}, we provide the detailed analysis
%	and give a tight bound of $exp({-iq(\sqrt{f} - \sqrt{1-f})^2})$.
%
%Lemma \ref{lem:random exponential bound} establishes an exponential bound for $\mathcal{E}_{i+1}$, which is powerful when $iq$ is large.
%However, if $i$ is $o(1/q)$, then $exp({-iq(\sqrt{f} - \sqrt{1-f})^2}) > 1$ and the bound would be rendered useless.
%To make up for this, we provide another lemma that can bound a node's failure probability even when $i$ is small.
%It requires a stronger constraint on the former portion of correct outputs,
%	but can bound the failure probability $\mathcal{E}_i$ by $p$ for general $i$.

\begin{lem}
\label {lem:random general bound}
If a constant fraction $f > 0.5$ of the first $i$ nodes are correct, s.t.
$
	{f\over 1-f} \ge \sqrt{q\over 1-q},
$
then node $i+1$'s failure probability is upper bounded by $p$.
\end{lem}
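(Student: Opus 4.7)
Let $X$ and $Y$ denote the random numbers of correct and wrong neighbor decisions observed by node $i+1$. Since each of the $fi$ correct and $(1-f)i$ wrong predecessors is a neighbor of $i+1$ independently with probability $q$, the two variables are independent with $X\sim\mathrm{Bin}(fi,q)$ and $Y\sim\mathrm{Bin}((1-f)i,q)$. Conditioning on whether the private signal $s_{i+1}$ agrees with the ground truth (probability $p$) and observing that the Majority Algorithm fails iff the wrong votes strictly outnumber the correct ones gives the exact decomposition
\[
\mathcal{E}_{i+1}\;=\;p\,\Pr[Y\ge X+2]\;+\;(1-p)\,\Pr[Y\ge X].
\]
Since $(1-p)\Pr[Y\ge X]\le 1-p$ trivially, the goal $\mathcal{E}_{i+1}\le p$ reduces to establishing the tail estimate $\Pr[Y\ge X+2]\le(2p-1)/p$.

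I will establish this by a Chernoff-style moment bound: for any $\alpha\ge 1$,
\[
\Pr[Y\ge X+2]\;\le\;\alpha^{-2}\,\E\bigl[\alpha^{Y-X}\bigr]\;=\;\alpha^{-2}(1-q+q\alpha)^{(1-f)i}(1-q+q/\alpha)^{fi}.
\]
Taking $\alpha=\sqrt{f/(1-f)}\ge 1$ (using $f>\tfrac12$) and applying weighted AM--GM (equivalently, Jensen on $\log$) simplifies the right-hand side to
\[
\alpha^{-2}\bigl(1-(1-2\sqrt{f(1-f)})\,q\bigr)^{i}\;\le\;\tfrac{1-f}{f}\exp\!\bigl(-(1-2\sqrt{f(1-f)})\,iq\bigr),
\]
which decays exponentially in $iq$ since $1-2\sqrt{f(1-f)}>0$ whenever $f>\tfrac12$.

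The argument is then completed by splitting on the magnitude of $iq$. In the large-$iq$ regime the exponential bound forces $\Pr[Y\ge X+2]\le(2p-1)/p$ directly. In the small-$iq$ regime the expected number of neighbors is $O(1)$; the event $\{X=Y=0\}$ has probability $(1-q)^i$ close to $1$ and yields failure probability exactly $1-p<p$, while a short calculation shows that the residual contribution from realizations with a few neighbors is $O(iq)$ and can be absorbed without exceeding the threshold. The hypothesis $f/(1-f)\ge\sqrt{q/(1-q)}$ (equivalently $f^2(1-q)\ge(1-f)^2 q$) enters by ensuring that the combinatorial bias from the $fi$ vs.\ $(1-f)i$ split dominates the per-increment variance $q(1-q)$, so that the transition between these two regimes is consistent and neither bound is violated in the intermediate range of $iq$.

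The main obstacle is the boundary case $f/(1-f)=\sqrt{q/(1-q)}$ combined with $iq=\Theta(1)$, where both the Chernoff bound and the signal-only bound are nearly tight and $\alpha=\sqrt{f/(1-f)}$ barely exceeds $1$. There I would fall back on the stochastic dominance $X\succeq_{\text{st}}Y$ (which holds since $fi\ge(1-f)i$) together with an explicit estimate of the tie probability $\Pr[X=Y]$ from the binomial pmf, combining these to pin down the constants so that the required inequality $\Pr[Y\ge X+2]\le(2p-1)/p$ is met in this crossover regime; everything else is a routine calculation.
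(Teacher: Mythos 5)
Your setup is sound and essentially matches the paper's: the decomposition $\mathcal{E}_{i+1}=p\Pr[Y\ge X+2]+(1-p)\Pr[Y\ge X]$ (up to a tie-breaking convention) is exactly the paper's starting point, with $X\sim\mathrm{Bin}(fi,q)$ and $Y\sim\mathrm{Bin}((1-f)i,q)$. The problem is the very next step. By discarding the factor $\Pr[Y\ge X]\le 1$ you reduce the lemma to the claim $\Pr[Y\ge X+2]\le (2p-1)/p$, and that claim is false precisely in the crossover regime your own proof flags as delicate. Take $p=0.55$ and $f=0.525$, so that $f/(1-f)=\sqrt{p/(1-p)}$ and the hypothesis is satisfied (under either reading of it, since $\sqrt{q/(1-q)}$ is tiny); take $q=0.01$ and $i=1600$, so $iq=16$. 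Then $X-Y$ is approximately normal with mean $(2f-1)iq=0.8$ and standard deviation about $4$, so $\Pr[Y\ge X+2]\approx\Pr[Z\le -0.7]\approx 0.24$, while $(2p-1)/p\approx 0.18$. Your Chernoff bound gives only $\tfrac{1-f}{f}e^{-iq(\sqrt{f}-\sqrt{1-f})^2}\approx 0.89$ here, because $(\sqrt{f}-\sqrt{1-f})^2\approx 10^{-3}$; it does not fall below $0.18$ until $iq$ is in the thousands, and the small-$iq$ argument clearly does not reach $iq=16$. The fallback you sketch (stochastic dominance plus tie-probability estimates) cannot close this: dominance of $X$ over $Y$ only yields bounds of the form $\Pr[Y\ge X+2]\le 1/2$, which is insufficient for any $p<2/3$.

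The missing idea is that the two tails must be compared against each other rather than bounded separately. The paper writes $\Pr[\text{correct}]=\Pr[X\ge Y+2]+p\Pr[|X-Y|\le 1]$ and $\Pr[\text{fail}]=\Pr[Y\ge X+2]+(1-p)\Pr[|X-Y|\le 1]$, and proves $\Pr[X\ge Y+2]/\Pr[Y\ge X+2]\ge (f/(1-f))^2\ge p/(1-p)$ by a term-by-term comparison of the two sums: pairing the term indexed by $(l_1,l_2)$ in the numerator with the term indexed by $(l_2,l_1)$ in the denominator makes all powers of $q$ and $1-q$ cancel, leaving a ratio of binomial coefficients that is at least $(f/(1-f))^{l_1-l_2}$. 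The mediant inequality then gives $\Pr[\text{correct}]\ge p$. This is both sharper than your route (it yields failure probability at most $1-p$, which is what the paper actually uses later, rather than the literal ``at most $p$'' in the statement) and, as the example shows, necessary: even for the weaker literal conclusion, bounding $\Pr[Y\ge X]$ by $1$ loses too much.
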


%The proof of Lemma \ref{lem:random general bound} is more brute force compared to that of Lemma \ref{lem:random exponential bound}.
%Let $\text{Bin}(n,q)$ denote a binomial distribution with parameters $n$ and $q$, i.e.
%$
%	\Pr[\text{Bin}(n, q) = i]
%=
%	{n\choose i} q^{i}(1-q)^{n-i}.
%$
%Therefore, $\text{Bin}(fi, q)$ (resp. $\text{Bin}((1-f)i, q)$) can be used to denote the number of correct (resp. wrong) neighbors for node $i+1$.
%By definition of Majority Algorithm, we have
%$
%	1 - \mathcal{E}_{i+1}
%\ge
%	\Pr[
%		\text{Bin}(fi, q) \ge \text{Bin}((1-f)i, q) + 2
%	].
%$
%In the proof of Lemma \ref{lem:random general bound},
%	we simply estimate such a probability by brute force and bound it by $p$.
%The detailed proof is presented in Appendix \ref{subsec:random general bound}.

\subsection {Proof of Theorem \ref{thm:random graph log n reachable}}

In this subsection, we provide a detailed proof of Theorem \ref{thm:random graph log n reachable}.
%In this subsection, we provide the proof ideas for the three lemmas mentioned above.
%According to Equation \eqref{equ:bound2}, these three lemmas are sufficient to derive Theorem \ref{thm:random graph log n reachable}.

The reason why random graphs behave well for the right value of q is that randomness defers the process of information cascades.
The fewer neighbors, the more likely a node will output its private signal, thereby 1) having a high probability of being wrong, but 2) revealing important information to its neighbors. 
When $q = \Theta(1/ \log n)$, with high probability each of the first $\Theta(\log n)$ nodes can have at most one neighbor.
By definition of Majority Algorithm, any node with only one neighbor will be forced to output its own private signal.

%We first present a lemma that demonstrates the power of information cascade in random network.
%More specifically, it proves that an established cascade among the first $\log n / q$ nodes decides the outputs of all later nodes with high probability.

Using Lemma \ref{lem:random exponential bound}, we can prove that an established cascade among the first $\log n / q$ nodes decides the outputs of all later nodes with high probability(proof in Appendix \ref{subsec:first logn/q deciding}):
\begin{lem} \label {lem:first logn/q deciding}
If among the first $\log n/q$ nodes,
	only a small constant fraction $f < 0.5$ output wrongly,
	then for the later $n - (\log n/q)$ nodes, the expected number of wrong outputs is at most $O(1)$.
\end{lem}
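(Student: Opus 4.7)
The plan is to propagate the hypothesis forward: because a strict majority of the first $N_0 := \log n / q$ nodes are correct, Lemma \ref{lem:random exponential bound} can be applied iteratively, bounding every subsequent node's failure probability by $n^{-c}$ for a constant $c$ that we can push as large as we like by tuning the constant hidden in $q = \Theta(1/\log n)$. Summing these tiny probabilities across the tail then gives the desired $O(1)$ bound.

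More precisely, fix a constant $f^{\ast} \in (f, 1/2)$, let $X_j$ indicate that node $j$ is wrong, and write $W_j = \sum_{i \le j} X_i$. Define the invariant event $A_j := \{W_j \le f^{\ast} j\}$. The hypothesis gives $W_{N_0} \le f N_0 < f^{\ast} N_0$, so $A_{N_0}$ holds. Whenever $A_{j-1}$ holds for some $j > N_0$, at least a $1 - f^{\ast} > 1/2$ fraction of the first $j-1$ nodes are correct, so Lemma \ref{lem:random exponential bound} yields
\[
\Pr[c_j \ne b \mid A_{j-1}] \;\le\; e^{-\Theta((j-1)q)} \;\le\; e^{-\Theta(\log n)} \;\le\; n^{-2},
\]
using $j-1 \ge N_0 = \log n / q$ and choosing the hidden constant in $q = \Theta(1/\log n)$ large enough. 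In particular $\mathbb{E}\bigl[X_j \mathbf{1}[A_{j-1}]\bigr] \le n^{-2}$, hence $\mathbb{E}\bigl[\sum_{j > N_0} X_j \mathbf{1}[A_{j-1}]\bigr] \le n^{-1}$.

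To absorb the event that $A$ ever fails, let $T$ be the first $j > N_0$ such that $A_{j-1}$ fails, or $T = n + 1$ otherwise. On $\{T \le n\}$ we have $W_{T-1} > f^{\ast}(T-1) \ge f^{\ast} N_0$, so
\[
\sum_{j > N_0} X_j \mathbf{1}[A_{j-1}] \;\ge\; \sum_{j = N_0 + 1}^{T - 1} X_j \;=\; W_{T-1} - W_{N_0} \;>\; (f^{\ast} - f) N_0.
\]
Markov's inequality therefore gives $\Pr[T \le n] \le n^{-1} / ((f^{\ast} - f) N_0) = O(q / (n \log n))$, and hence
\[
\mathbb{E}\!\left[\sum_{j = N_0 + 1}^{n} X_j\right] \;\le\; \mathbb{E}\!\left[\sum_{j > N_0} X_j \mathbf{1}[A_{j-1}]\right] + n \cdot \Pr[T \le n] \;=\; O(1).
\]

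The main obstacle is that Lemma \ref{lem:random exponential bound} only applies while the strict-majority invariant $A$ is in force; one cannot bound the failure probability of node $j$ unconditionally. The stopping-time decomposition above trivially bounds the cost on the bad event $\{T \le n\}$ by $n$, but Markov combined with the $n^{-2}$ per-node bound drives $\Pr[T \le n]$ below $O(1/(n \log n))$, making this cost negligible.
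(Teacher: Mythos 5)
Your argument is correct and runs on the same engine as the paper's proof: iterate Lemma \ref{lem:random exponential bound} past node $N_0=\log n/q$, where $(j-1)q\ge\log n$ drives each conditional failure probability down to $n^{-2}$, and then charge at most $n$ to a bad event of probability $O(1/n)$. The decomposition differs, though. The paper conditions on the event that \emph{every} node after $N_0$ is correct (under which the correct fraction among the first $i$ nodes stays at least $1-f$), lower-bounds its probability by $(1-n^{-2})^{n}\ge 1-n^{-1}$, and pays $n$ on the complement. You instead maintain the weaker invariant that the wrong fraction stays below some $f^{\ast}<1/2$, and control its first failure time $T$ by applying Markov to $\sum_j X_j\mathbf{1}[A_{j-1}]$; the observation that breaking the invariant forces at least $(f^{\ast}-f)N_0$ wrong tail nodes is a nice touch that pushes $\Pr[T\le n]$ down to $O(1/(n\log^2 n))$, below the paper's $O(1/n)$. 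Your version tolerates occasional wrong tail nodes without immediately entering the bad event, at the cost of a little extra bookkeeping; both yield $O(1)$. One small correction: enlarging the hidden constant in $q=\Theta(1/\log n)$ does not improve the exponent in $e^{-\Theta((j-1)q)}$, because the hypothesis fixes $N_0q=\log n$ regardless of $q$; what actually makes the bound $\le n^{-2}$ is that the constant $(\sqrt{1-f^{\ast}}-\sqrt{f^{\ast}})^2$ from Lemma \ref{lem:random exponential bound}, combined with the base of the logarithm, is large enough --- which is exactly the adjustment the paper makes explicit.
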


%\begin{proof}
%We can use Lemma \ref{lem:random exponential bound} to directly derive such a result.
%See Appendix \ref{subsec:first logn/q deciding} for detailed proof.
%\end{proof}

Lemma \ref{lem:first logn/q deciding} is insufficient to bound the $\Theta(\log n)$ expected failure nodes as required by Theorem \ref{thm:random graph log n reachable},
	in that it only bounds the loss of later nodes in the sequence. It could be the case that the first $\log n/q = \Theta(\log^2 n))$ nodes all fail.
To bound the overall $\mathcal{E}$, it is essential to analyze the performance of the first $\log n/q$ nodes.
We can use an induction argument to show that for the optimal $q$, the first $\log n/q$ nodes are majority-correct with high probability:

%Lemma \ref{lem:first logn/q deciding} shows that information cascades
%	in random graphs are determined mostly by the performance of the first $\log n/q$ nodes.
%If the first $\log n/q$ nodes generate a powerful cascade, all later nodes will follow regardless of their own private signals.
%However, this lemma remains insufficient to prove Theorem \ref{thm:random graph log n reachable}
%	in that it only bounds the loss of later nodes in the sequence.
%To bound the overall $\mathcal{E}$, it is also essential to analyze the performance of the first $\log n/q$ nodes.
%We present the following lemma which shows that the first $\log n/q$ nodes are majority-correct with high probability.
%In other words, the pre-conditions of Lemma \ref{lem:first logn/q deciding} are easily satisfied.

\begin{lem}
\label{lem:majority among first logn/q}
Let $\delta = {1\over 2}(p + {\sqrt{p}\over \sqrt{p} + \sqrt{1-p}})$.
There exists connection probability $q_{opt} = \Theta(1/\log n)$ such that the first $\Theta(\log n/q)$ nodes
	contains at least $\delta$ portion of correct outputs with probability $1 - O(n^{-1}\log n)$.
\end{lem}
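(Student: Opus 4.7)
The plan is to choose $q_{opt}=c/\log n$ for a small constant $c>0$, giving $m:=\log n/q_{opt}=\Theta(\log^2 n)$, and to partition $[1,m]$ into a ``signal phase'' $[1,i_1]$ with $i_1:=C_1\log n$ (for a large constant $C_1$) and a ``cascade phase'' $(i_1,m]$. Writing $X_j=\mathbb{1}[c_j=b]$ and $p_j=P(X_j=1\mid \mathcal{F}_{j-1})$ for the filtration generated by signals, edges, and decisions, I will show that each phase succeeds with probability $1-O(1/n)$ and that together they force $f_m\ge\delta$, where $f_j$ denotes the fraction of correct outputs among the first $j$ nodes.

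In the signal phase, node $j$ has in-degree $D_j\sim\mathrm{Bin}(j-1,q)$ with $\mathbb{E}[D_j]\le C_1c$, so a binomial tail bound gives $P(D_j\ge 2)=O((C_1c)^2)$. Whenever $D_j\le 1$ the Majority Algorithm outputs the private signal $s_j$ irrespective of history (as noted in the introduction), which is correct with probability $p$; hence $p_j\ge p\cdot P(D_j\le 1)\ge p-O((C_1c)^2)$. Since $\delta<p$, taking $c$ small enough yields $p_j\ge\delta+2\alpha$ with $\alpha:=(p-\delta)/4$. Applying Azuma--Hoeffding to the martingale $M_{i_1}:=\sum_{j\le i_1}(X_j-p_j)$ (increments bounded by $1$) gives $f_{i_1}\ge\delta+\alpha$ with probability $1-\exp(-\Omega(\alpha^2 i_1))$, which is at least $1-1/n$ for $C_1$ sufficiently large.

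For the cascade phase, introduce the stopping time $\tau:=\inf\{j>i_1:f_j<\delta\}$, with $\tau=m+1$ if this never occurs. On $\{j\le\tau\}$ the hypothesis $f_{j-1}\ge\delta>1/2$ of Lemma~\ref{lem:random exponential bound} holds, so $P(X_j=0\mid\mathcal{F}_{j-1})\le\epsilon_j:=e^{-\Theta((j-1)q)}$. A geometric-series estimate (substituting $k=jq$) shows $\sum_{j=i_1+1}^m\epsilon_j=O(\log n)$; this is both the expected number of mistakes in the cascade phase and a bound on the predictable quadratic variation of the stopped martingale $\sum_{i_1<j\le i\wedge\tau}(X_j-p_j)$. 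Freedman's inequality, which exploits the small variance, then yields $W^{(2)}_{m\wedge\tau}:=\sum_{j=i_1+1}^{m\wedge\tau}(1-X_j)\le C'\log n$ with probability $1-1/n$ for some constant $C'$.

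To close the argument, work on the intersection of the two high-probability events. If $\tau\le m$, then $f_\tau<\delta$ implies the total number of wrong outputs by time $\tau$ exceeds $(1-\delta)\tau\ge(1-\delta)i_1$; since the signal-phase event caps the wrong count in $[1,i_1]$ at $(1-\delta-\alpha)i_1$, this forces $W^{(2)}_\tau>\alpha i_1=\alpha C_1\log n$, contradicting the Freedman bound once $C_1>C'/\alpha$. Hence $\tau>m$, so $f_m\ge\delta$, with total failure probability $O(1/n)$, well within the claimed $O(n^{-1}\log n)$. \emph{The main obstacle} is the apparent circularity between ``$f$ stays $\ge\delta$'' (needed to invoke Lemma~\ref{lem:random exponential bound}) and ``we are trying to prove $f$ stays $\ge\delta$''; the stopping time $\tau$ breaks this circularity by making the lemma's hypothesis hold by definition on $\{j\le\tau\}$, after which the martingale concentration rules out $\tau\le m$.
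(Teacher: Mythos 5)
Your overall architecture (an ``isolated'' signal phase of length $\Theta(1/q)$ followed by a concentration argument, with a stopping time to break the circularity) parallels the paper's segment-by-segment induction, and your signal-phase analysis is essentially the paper's first claim. The gap is in the cascade phase: you control all of $j\in(i_1,m]$ using only Lemma~\ref{lem:random exponential bound}, whose bound $\epsilon_j=e^{-\gamma(j-1)q}$ with $\gamma=(\sqrt{\delta}-\sqrt{1-\delta})^2$ is nearly vacuous throughout the window $i_1<j\le 1/(\gamma q)$, a window containing $\Theta(\log n)$ nodes each of which may fail with probability close to $1$ as far as that lemma is concerned. Consequently your estimate $\sum_j\epsilon_j=O(\log n)$ hides a constant of order $1/(\gamma c)$, so the best constant Freedman's inequality can give you satisfies $C'=\Omega\bigl(1/(\gamma c)\bigr)$. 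The final contradiction step needs $\alpha C_1>C'$, i.e.\ $C_1c=\Omega\bigl(1/(\alpha\gamma)\bigr)$, while the signal phase needs $C_1c$ small enough that $p\cdot(1-O((C_1c)^2))\ge\delta+2\alpha$, which forces $C_1c\le\sqrt{(p-\delta)/p}<1$. Since $\alpha\gamma\ll 1$ (for $p=0.6$ one would need $C_1c$ on the order of $10^4$ versus $C_1c\le 0.2$), these two requirements on the single quantity $C_1c$ are incompatible, and the step ``contradicting the Freedman bound once $C_1>C'/\alpha$'' does not go through.

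What is missing is the paper's Lemma~\ref{lem:random general bound}: as long as the running fraction of correct outputs exceeds $\sqrt{p}/(\sqrt{p}+\sqrt{1-p})$ (which $\delta$ does by construction), every subsequent node is correct with probability at least $p>\delta$, \emph{uniformly} in $jq$ --- this is the per-node bound that works in the regime $jq=O(1)$ where the exponential bound is useless. The paper runs a Chernoff bound segment by segment through the whole window $[i_1,\Theta(\log n/q)]$ using this fact, and reserves the exponential bound of Lemma~\ref{lem:random exponential bound} for the later estimates in Lemmas~\ref{lem:first logn/q deciding} and~\ref{lem:loss of former nodes}. Your stopping-time framework is repairable: on $\{j\le\tau\}$ replace $\epsilon_j$ by $1-p$ via Lemma~\ref{lem:random general bound} (with the threshold $\sqrt{p}/(\sqrt{p}+\sqrt{1-p})<\delta$ in place of $\delta$ where needed) and show the correct fraction concentrates around a value at least $p>\delta$; with that substitution the rest of your argument, and the $O(n^{-1}\log n)$ failure probability, goes through.
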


\begin{proof}
We can prove Lemma \ref{lem:majority among first logn/q} by induction.
Consider dividing the first $\Theta(\log n/q)$ nodes into $\Theta(\log n)$ segments,
	where each segment contains $\Theta(1/q) = \Theta(\log n)$ many nodes.
We analyze each segment independently and show that
\begin{itemize}
\item There exists $q_1 = \Theta(1/\log n)$,
		such that the first segment contains $\delta$ portion of correct outputs with probability at least $1 - O(n^{-1})$.
\item If the first $i$ segments contain $\delta$ portion of correct outputs,
		then the $(i+1)^{th}$ segment will also contain $\delta$ portion of correct outputs with probability $1 - O(n^{-1})$.
\end{itemize}
By a single Union Bound, we can combine these two results and show that
the first $\Theta(\log n/q)$ nodes contain $\delta$ portion of correct outputs with probability $1 - \log n\cdot O(n^{-1})$.
The detailed proof is provided in Appendix \ref{subsec:majority among first logn/q}.
\end{proof}

%The final lemma bounds the expected number of wrong outputs among the first $\Theta(\log n / q)$ nodes.

 Lemma \ref{lem:majority among first logn/q} and \ref{lem:random exponential bound} together imply a $\Theta(\log n)$ upper bound for the expected number of wrong outputs among the first $\Theta(\log n / q)$ nodes(proof in Appendix \ref{subsec:loss of former nodes}):
\begin{lem}
\label{lem:loss of former nodes}
If $q = q_{opt}$, the first $\Theta(\log n/q)$ nodes' expect to have at most $\Theta(\log n)$ many wrong outputs.
\end{lem}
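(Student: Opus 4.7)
The plan is to combine the high-probability guarantee of Lemma \ref{lem:majority among first logn/q} with the per-node exponential tail bound of Lemma \ref{lem:random exponential bound}, using the (stronger) segment-wise version of the good event that falls out of the inductive proof of Lemma \ref{lem:majority among first logn/q}. Let $N = \Theta(\log n / q) = \Theta(\log^2 n)$, partitioned into $\Theta(\log n)$ consecutive segments of length $\Theta(1/q)$ each, and let $\mathcal{G}$ be the event that \emph{every} segment contains at least a $\delta$ fraction of correct outputs. A union bound over the $\Theta(\log n)$ segments, applied to the single-segment bound supplied by the inductive step in the proof of Lemma \ref{lem:majority among first logn/q}, gives $\Pr[\mathcal{G}] \ge 1 - O(n^{-1}\log n)$.

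The easy case is the complement: under $\bar{\mathcal{G}}$ I bound the number of wrong nodes trivially by $N$, so the contribution to the expected loss is at most $\Pr[\bar{\mathcal{G}}]\cdot N = O(n^{-1}\log n)\cdot \Theta(\log^2 n) = o(1)$.

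For the main case, I argue that under $\mathcal{G}$ there is a constant $C = C(\delta)$ such that every prefix of length at least $C/q$ has a fraction of correct outputs bounded away from $1/2$. Indeed, for a node $i$ lying in segment $k+1$ with $k \ge C-1$, the prefix of length $i-1$ contains $k$ complete segments, each contributing at least $\delta/q$ correct nodes, plus at most $1/q$ additional nodes, so the correct fraction in the prefix is at least $\delta k/(k+1)$, which exceeds a fixed $f > 1/2$ once $C$ is chosen large enough (this is exactly where the gap $\delta > 1/2$ is consumed). Lemma \ref{lem:random exponential bound} then yields $\mathcal{E}_i \le e^{-\Theta(iq)}$ conditional on $\mathcal{G}$, and
\[
\sum_{i=1}^{N} \mathcal{E}_i \;\le\; \Pr[\bar{\mathcal{G}}]\cdot N \;+\; \frac{C}{q} \;+\; \sum_{i > C/q}^{N} e^{-\Theta(iq)} \;=\; o(1) + O(1/q) + O(1/q) \;=\; O(\log n),
\]
where the first $C/q$ nodes are each bounded trivially by $1$ and the tail is a geometric-type sum with ratio $e^{-\Theta(q)}$, summing to $O(1/q)$.

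The main obstacle, and the reason the segment-level statement matters rather than just the aggregate version of Lemma \ref{lem:majority among first logn/q}, is propagating the majority-correct condition to every sufficiently long prefix: the aggregate statement alone would permit all wrong outputs to be concentrated at the start of the interval, invalidating the hypothesis of Lemma \ref{lem:random exponential bound} for every later node. The per-segment guarantee sidesteps this at the cost of only an extra $\log n$ factor in the union bound, which is absorbed in the $o(1)$ term.
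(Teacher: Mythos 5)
Your proof is correct and follows essentially the same route as the paper's: condition on the per-segment good event from Lemma \ref{lem:majority among first logn/q} (the paper's event $Y$), observe that under it every sufficiently long prefix is majority-correct by a constant margin so Lemma \ref{lem:random exponential bound} applies, sum the resulting geometric tail to $O(1/q)=O(\log n)$, and absorb the complement via $\Pr[\bar{\mathcal{G}}]\cdot N = o(1)$. The only cosmetic difference is that the paper pins the prefix-correctness fraction to the specific constant $\sqrt{p}/(\sqrt{p}+\sqrt{1-p})$ via its choice of segment sizes $a$ and $b$, whereas you recover a generic $f>1/2$ by discarding the first $C/q$ nodes; both are fine.
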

%\begin{proof}
%The proof of this lemma (see Appendix \ref{subsec:loss of former nodes}) follows from Lemma \ref{lem:majority among first logn/q} and \ref{lem:random exponential bound}.
%In the proof of Lemma \ref{lem:majority among first logn/q},
%	we show that all of the first $\Theta(\log n/q)$ nodes have $\delta$ portion of correct outputs with high probability.
%Therefore, by Lemma \ref{lem:random exponential bound},
%	we can bound each node's failure probability by $e^{-\Theta(qi)}$.
%By linearity of expectation, the expected number of wrong outputs among the first $\Theta(\log n/q)$ nodes
%	is simply the sum of the failure probability for each node.
%Thus the expected number of wrong outputs among the first $\Theta(\log n/q)$ nodes may be efficiently bounded.
%A detailed analysis is presented in Appendix \ref{subsec:loss of former nodes}.
%\end{proof}

%With Lemma \ref{lem:first logn/q deciding}, \ref{lem:majority among first logn/q} and \ref{lem:loss of former nodes} proved,
%	we can ascertain the validity of Theorem \ref{thm:random graph log n reachable}.
\begin{proof}[Proof of Theorem \ref{thm:random graph log n reachable}]
With Lemma \ref{lem:first logn/q deciding}, \ref{lem:majority among first logn/q} and \ref{lem:loss of former nodes} proved, the expected number of wrong output decisions under connection probability $q_{opt}$ is bounded by
\begin{equation}
\begin{aligned}
\label{equ:bound2}
	\mathcal{E}_{G(n,q)}
& =
	\sum_{i=1}^{n}	\mathcal{E}_i
	=
	\sum_{i=1}^{\log n/q}	\mathcal{E}_i + \sum_{i=\log n/q + 1}^{n}	\mathcal{E}_i
\\
& \le
	\Theta(\log n) + (1 - O(n^{-1}\log n))\cdot O(1) + O(n^{-1}\log n)\cdot n
	=
	\Theta(\log n).
\end{aligned}
\end{equation}
which completes the proof.
\end{proof}

\subsection {Proof of Theorem \ref{thm:random graph log n bound}}
In the previous section, we prove that for the optimal connection probability $q_{opt} = \Theta(1 / \log n)$, the expected number of wrong outputs can be reduced to $\Theta(\log n)$.
However, it remains a problem whether we can move beyond $\Theta(\log n)$. In this section, we prove Theorem \ref{thm:random graph log n bound} which states that the bound in Theorem \ref{thm:random graph log n reachable} is asymptotically optimal.
%\begin{thm}
%The expected number of wrong outputs $\mathcal{E}$ in random graphs is lower bounded by $\Theta(\log n)$ for any connection probability $q$.
%\end{thm}

\begin{proof}[Proof of Theorem \ref{thm:random graph log n bound}]

We prove this theorem for two separate cases, namely when $q = O(1 / \log n)$ and $q = \omega(1 / \log n)$.

When $q = O(1 / \log n)$, the intuition is that we need at least $\Theta(1/q)$ nodes before accumulating an actual influential cascade.
For the $i^{th}$ nodes where $i \le 1/q$, its chance of being isolated is
$
	(1 - q)^{i} + iq(1-q)^{i-1}
\ge
	(1 - q)^{1 / q}
\sim
	{1 / e}.
$
Therefore the node's failure probability is at least $\mathcal{E}_i = (1-p)\cdot\Pr[\text{isolated}] = (1-p)/e$.
This lower bounds the expected number of failure nodes by $(1-p)/(eq) = \Theta(1/q)$.

When $q = \omega(1 / \log n)$,
	a wrong cascade occurs with high probability,
	thus resulting in a significant number of failure nodes.
With probability $(1-p)^{\Theta(1/q)}$, all of the first $\Theta(1/q)$ nodes observe a wrong signal and output the wrong guesses.
Using Lemma \ref{lem:random exponential bound}, we can show that with high probability, the majority of later nodes follow this wrong cascade.
Therefore, the total number of failure nodes is at least
$
(1-p)^{\Theta(1/q)}\cdot \Theta(n) = n^{1 - o(1)} = \Omega(\log n).
$
\end{proof}

\section{General Lower Bound}
In this section, we design a non-constructive scheme that finds the optimal decision-making algorithms for general graphs.
Given the neighboring graph $G = ([n], E)$, our goal is to find the set of algorithms $\{L_i\}_{i=1}^{n}$ such that
$
	(L_1, \dotsb, L_n)
=
	\arg\min_{L'_1, \dotsb, L'_n} \mathcal{E}_{G}(L'_1,\dotsb, L'_n).
$
%An important use of the non-constructive scheme is to provide a general lower bound for arbitrary topology, which is achieved by calculating $\mathcal{E}$ under the optimal output algorithms of complete graphs:

An important use of the non-constructive scheme is to provide a general lower bound for arbitrary topology.
For any set of decision-making algorithms in a topology $G$, we can simulate it on a complete graph by considering only edges in $G$.
Thus the minimal $\mathcal{E}$ for complete graphs is a general lower bound for arbitrary topology:

\begin{thm}
\label{thm:complete graphs bound}
The expected number of wrong nodes $\mathcal{E}$ under the optimal decision-making algorithms of complete graphs lower bounds the $\mathcal{E}$ of any algorithms in any topology.
\end{thm}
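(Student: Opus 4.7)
I would prove Theorem~\ref{thm:complete graphs bound} by a direct simulation argument: any family of decision-making algorithms $(L_1,\dotsc,L_n)$ operating on an arbitrary topology $G$ can be lifted to a family $(\tilde L_1,\dotsc,\tilde L_n)$ on the complete graph $K_n$ that produces an identically distributed sequence of decisions, and therefore the same expected number of wrong nodes. Since the minimum of $\mathcal{E}_{K_n}$ is taken over a strictly larger class of admissible algorithms (every previous decision is visible to every node), this minimum can only be smaller than or equal to $\mathcal{E}_G(L_1,\dotsc,L_n)$ for any $G$ and any choice of $L_i$.

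\textbf{The construction.} Fix a graph $G=([n],E)$ and algorithms $L_1,\dotsc,L_n$ that are valid under~$G$, meaning that $L_i$ depends only on the private signal $s_i$ and on the coordinates $(c_j)_{j\in N_G(i)}$, where $N_G(i):=\{j<i:\{i,j\}\in E\}$. On $K_n$, define $\tilde L_i:\{0,1\}^{i-1}\times\{0,1\}\to\{0,1\}$ by
\[
\tilde L_i(c_1,\dotsc,c_{i-1},s_i)\;:=\;L_i\bigl((c_j)_{j\in N_G(i)},\,s_i\bigr),
\]
so $\tilde L_i$ simply discards the decisions of $i$'s non-neighbors in $G$ and then runs $L_i$. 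Each $\tilde L_i$ has the correct signature, so it is a valid $K_n$-algorithm.

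\textbf{Coupling and conclusion.} I would couple the two processes by drawing a common ground truth $b$ and common private signals $s_1,\dotsc,s_n$; these distributions do not depend on the topology. By induction on $i$ I then show that the decisions $c_i$ produced in $G$ and in $K_n$ coincide pointwise: the base case $i=1$ is trivial since both nodes see only $s_1$, and for the inductive step, if the earlier decisions agree, then the arguments fed to $L_i$ and $\tilde L_i$ are identical by construction. Consequently the random variable counting wrong nodes has the same value sample by sample, which yields $\mathcal{E}_{K_n}(\tilde L_1,\dotsc,\tilde L_n)=\mathcal{E}_G(L_1,\dotsc,L_n)$. Minimizing the left-hand side over all $K_n$-algorithms gives the theorem. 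The only conceptual obstacle is making precise the convention that an algorithm ``valid under topology $G$'' must ignore non-neighbor decisions; once this convention is adopted the argument is essentially a tautology, which is precisely what makes Theorem~\ref{thm:complete graphs bound} a useful bridge, letting any lower bound proved for $K_n$ transfer automatically to every topology.
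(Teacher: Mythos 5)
Your proposal is correct and is essentially the paper's own argument: the paper justifies Theorem~\ref{thm:complete graphs bound} with the one-line observation that any algorithm on a topology $G$ can be simulated on the complete graph by considering only the edges in $G$, which is exactly your restriction-and-coupling construction. Your version merely makes the simulation, the coupling of signals, and the induction on decisions explicit, which is a faithful formalization rather than a different route.
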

%\begin{proof}
%For any set of output algorithms in a topology $G$, we can simulate it on a complete graph by considering only edges in $G$.
%Thus the minimal $\mathcal{E}$ for complete graphs is a general lower bound for arbitrary topology.
%\end{proof}

From our previous discussion on random graphs, we know that the expected number of wrong guesses $\mathcal{E}$ highly depends on the number of nodes revealing their private signals.
This inspires us to make the following definitions:
\begin{defn}
Node $i$ reveals \textbf{valid} information under $c^{i-1}$
	if and only if node $i$ outputs its private signal under $c^{i-1}$,
	i.e. $c_i = L_i(c^{i-1}, s_i) = s_i$.
Furthermore, we denote $\text{Valid}(\cdot)$ as a function that extracts a vector of valid information out of a decision vector, i.e.
	$c_j$ is in the vector $\text{Valid}(c^{i})$ if and only if $c_j$ is valid.
\end{defn}

\begin{defn}
A node $i$'s \textbf{reveal set} $RS_i$ is the set of $c^{i-1}$ which causes node $i$ to reveal \textbf{valid} information.
\end{defn}

Note that any valid information is correct with probability $p$ and is independent of other nodes. Using the same Bayesian argument\cite{easley2010networks}, we can prove a similar lemma as Claim \ref{clm:multiple is optimal} in Section \ref{sec:prel}, which states that a node's guess is beneficial for later nodes if and only if the guess is valid(proof in Appendix \ref{subsec:selfMaj}):
%By definition, any valid output is correct with probability $p$.
%This probability is independent and not correlated with the output of other nodes.
%In ``Networks, Crowds and Markets" \cite{easley2010networks},
%	it is shown that Majority Algorithm is optimal when a node observes multiple independent signals.
%Here, we present a similar lemma, which specify the output algorithm that minimizes a node $i$'s failure probability $\mathcal{E}_i$ in the general cases.

\begin{lem}
\label{lem:selfMaj}
For a node $i$ seeking to minimize its failure probability $\mathcal{E}_i$, the optimal decision-making algorithm is to perform the Majority Algorithm on $\text{Valid}(c^{i - 1})\bigcup\{s_i\}$, i.e. $c_i = \Maj(\text{Valid}(c^{i - 1}), s_i)$.
%	$\{$all previous \textbf{valid} outputs, the node's private signal $\}$.
\end{lem}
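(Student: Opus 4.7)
The plan is to reduce node $i$'s problem to a Bayesian posterior computation on the ground truth $b$ and then invoke Claim \ref{clm:multiple is optimal}. Since node $i$ seeks to minimize $\mathcal{E}_i$, its optimal rule is the MAP estimator $c_i = \arg\max_{b \in \{0,1\}} \Pr[b \mid c^{i-1}, s_i]$, so the whole task reduces to identifying a sufficient statistic for $b$ inside the observation $(c^{i-1}, s_i)$ and then computing the MAP rule on that statistic.

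The core step is an induction on $j \le i-1$ showing that the realized vector $c^{j}$ is a deterministic function of only the valid signals produced so far, namely of $\{s_\ell : \ell \le j, \ c^{\ell-1} \in RS_\ell\}$ together with the fixed algorithms $L_1, \ldots, L_j$. The inductive step splits on whether $c^{j-1} \in RS_j$: in the valid case, $L_j(c^{j-1}, \cdot)$ is the identity, so $c_j = s_j$ is itself a valid signal and $c^j = (c^{j-1}, s_j)$ extends the reconstruction; in the invalid case, $L_j(c^{j-1}, \cdot)$ is a constant function, so $c_j$ is determined by $c^{j-1}$ alone, which by the inductive hypothesis is determined by earlier valid signals. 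This is the step where one has to exclude the "anti-signal" map $c_j = 1 - s_j$, but that map has failure probability $p > 1/2$ and is strictly dominated by each of the identity/constant choices, so a node that minimizes $\mathcal{E}_j$ never plays it.

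With the reconstruction lemma in hand, the $\sigma$-algebra generated by $c^{i-1}$ coincides with the one generated by $\text{Valid}(c^{i-1})$, giving $\Pr[b \mid c^{i-1}, s_i] = \Pr[b \mid \text{Valid}(c^{i-1}), s_i]$. The entries of $\text{Valid}(c^{i-1})$ are private signals $s_{j_1}, \ldots, s_{j_k}$, which together with $s_i$ form $k+1$ mutually independent draws each equal to $b$ with probability $p$. Claim \ref{clm:multiple is optimal} then directly yields that the MAP rule on these bits is their majority, which is exactly $\Maj(\text{Valid}(c^{i-1}), s_i)$. The main obstacle is the deterministic-reconstruction induction together with the careful exclusion of the anti-signal map; once those are settled, the Bayesian reduction and the appeal to Claim \ref{clm:multiple is optimal} are routine.
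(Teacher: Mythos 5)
Your proposal is correct and follows essentially the same route as the paper: both arguments come down to showing that invalid decisions are deterministic functions of the earlier history (hence uninformative about $b$), that the valid entries together with $s_i$ act as independent signals, and that the MAP rule on these is the majority — the paper does this by directly factoring the likelihood via the chain rule to get the ratio $(p/(1-p))^{n_0-n_1}$, while you package the same content as a reconstruction induction plus an appeal to Claim~\ref{clm:multiple is optimal}. Your explicit exclusion of the anti-signal map is a point the paper's proof glosses over (it simply asserts that a non-valid $c_j$ is independent of $s_j$), so flagging it is a small improvement rather than a divergence.
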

%\begin{proof}
%Intuitively, a node $i$ outputs a guess $c_i$ if and only if the Bayesian probability for $c_i$ is larger than $1 - c_i$.
%Since non-valid choices doesn't provide any information on the ground truth value $b$,
%	they will not contribute any influence on the Bayesian probability.
%Thus we may discard them and consider only previous valid outputs.
%For detailed proof, please see.
%\end{proof}

%From Lemma \ref{lem:selfMaj}, we conclude that a node's output is beneficial for later nodes if and only if the output is valid.
%The remaining part of this section is organized as follows.
%In section 4.1, we propose a set of constraints that uniquely decides the optimal algorithms for general graphs.
%In section 4.2, we efficiently solve these constraints for complete graphs.
%Finally, in section 4.3, we estimate the optimal $\mathcal{E}$ for complete graphs, which is the general lower bound of our model.

\subsection{A non-constructive optimal algorithm scheme for general graphs}
\label{Construction}
In this section, we provide a general scheme for finding the optimal decision-making algorithms of all nodes in arbitrary topologies. Our scheme is non-constructive in that it neither explicitly specifies what the optimal algorithms are, nor shows how to find them efficiently.
%The scheme will focus on offering insights on the inner-structure of optimal algorithms rather than directly deriving it.

%We first provide a non-constructive algorithm scheme for general graphs.
%It contains constraints that are sufficient to decide the optimal output algorithms.
%However, we do not provide efficient methods to solve the scheme for general graphs.
%The scheme will focus on offering insights on the inner-structure of optimal algorithms rather than directly deriving it.

%Suppose nodes hold a discussion about their decision-making algorithms before the sequential trial,
%	where they publicly decide their algorithms $L_1,\dotsb,L_n$ and later apply these algorithms to calculate output decisions.
%The \emph{discussion} is step by step, where nodes decide their algorithms sequentially.
%First, node $1$ publicly announces $L_1$, then node 2 announces $L_2$ with knowledge of $L_1$, etc.
%Any node $i$ will be given knowledge of $L_1, \dotsb, L_{i-1}$ before deciding $L_i$.
%Each node design the output algorithms such that each node locally minimizes the failure probability of itself.
%Denote this construction scheme as $GC(\cdot)$, the abbreviation of "greedy construction",
%	then for node $i$, we have $L_i = GC(L_1,\dotsb,L_{i-1})$.
%In the following theorem, we show that such local optimality can result in an overall optimality.

Given the underlying topology, all the nodes decide their algorithms sequentially in a \emph{greedy} way as follows.
Node $1$ publicly announces $L_1$, based on its own rationality, then node 2 announces $L_2$ with the knowledge of $L_1$, etc(see Algorithm \ref{algo:non-const}).
Any node $i$ will base its knowledge on $L_1, \dotsb, L_{i-1}$ when deciding $L_i$.
Each node designs its own decision-making algorithm in order to locally minimizes the failure probability.
Denote this construction scheme as $GC(\cdot)$, the abbreviation of ``greedy construction'', then for node $i$, we have $L_i = GC(L_1,\dotsb,L_{i-1})$.
We can prove by contradiction that such a locally optimal scheme can result in an overall optimality(proof in Appendix ~\ref{subsec:local implies overall}):

\begin{algorithm}[htb]
\caption{A non-constructive optimal algorithm scheme for general graphs}
\begin{algorithmic}[1]\label{algo:non-const}
\label{alg:greedy construction}
\STATE Given $L_1, \dotsb, L_{n-1}$, node $n$ constructs $L_n$ that aims at minimizing its own failure probability $\mathcal{E}_n$.

\STATE Given $L_1, \dotsb, L_{n-2}$, and also the fact that node $n$ is greedy,
		 node $n - 1$ constructs $L_{n-1}$ such that the overall loss of him and node $n$ is minimized.

\STATE This process continues. Each $L_i$ greedily minimizes the expected number of wrong nodes after among $\{i, \dotsb, n\}$
		 given $L_1, \dotsb, L_{i-1}$.

\STATE Node 1 knows that all later nodes are ``greedy''.
	Their algorithms $L_2, \dotsb, L_n$ can all be written as a function of $L_1$.
	It then constructs $L_1$ such that $\mathcal{E}$ is minimized.

\STATE Knowing what $L_1$ is, we can backtrack $L_2$, and recursively all the output algorithms $L_i$.
\end{algorithmic}
\end{algorithm}

\begin{thm}
\label{local implies overall}
$L_1, L_2, \dots, L_n$ constructed as in Algorithm \ref{alg:greedy construction} minimizes the expected number of wrong nodes,
	i.e. $\mathcal{E}(L_1, L_2, \dots, L_n)$.
\end{thm}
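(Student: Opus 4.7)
The plan is to prove this by backward induction on $i$ from $n$ down to $1$, which is the standard ``principle of optimality'' / Bellman argument for finite-horizon sequential decision problems. First I would formalize the value function
\[
V_i(L_1,\dots,L_{i-1}) \;:=\; \min_{L_i,\dots,L_n}\; \E\!\left[\sum_{j=i}^n \I[c_j\neq b] \,\Big|\, L_1,\dots,L_{i-1}\right],
\]
so that the theorem is equivalent to showing that the algorithms produced by Algorithm~\ref{algo:non-const} attain $V_1$ (an unconditional quantity). Since for each $i$ and each $c^{i-1}\in\{0,1\}^{i-1}$ the function $L_i(c^{i-1},\cdot)\colon\{0,1\}\to\{0,1\}$ takes only four possible values, the family of admissible $L_i$'s is finite and all minima and $\arg\min$'s are attained.

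Next I would establish the Bellman recursion
\[
V_i(L_1,\dots,L_{i-1}) \;=\; \min_{L_i}\bigl[\mathcal{E}_i(L_1,\dots,L_i) \;+\; V_{i+1}(L_1,\dots,L_i)\bigr].
\]
The base case $i=n$ is immediate since $V_n = \min_{L_n}\mathcal{E}_n$, which is exactly what Algorithm~\ref{algo:non-const}'s first step picks. For the inductive step, fix $L_1,\dots,L_{i-1}$ and any candidate tail $(L_i,\dots,L_n)$. By linearity of expectation,
\[
\E\!\left[\sum_{j=i}^n \I[c_j\neq b]\right] \;=\; \mathcal{E}_i(L_1,\dots,L_i) \;+\; \E\!\left[\sum_{j=i+1}^n \I[c_j\neq b]\right].
\]
The induction hypothesis, applied with the extended prefix $(L_1,\dots,L_i)$, says that the second term is at least $V_{i+1}(L_1,\dots,L_i)$ and that equality is achieved by the greedy continuation the algorithm would generate starting from $L_i$. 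Taking $\min$ over $L_i$ on both sides yields the claimed recursion, and by construction node $i$'s greedy step in Algorithm~\ref{algo:non-const} picks exactly an $\arg\min$ of this recursion (the tail cost being replaced by its greedy realization, which by the hypothesis equals $V_{i+1}$). Hence the greedy construction attains $V_i$; setting $i=1$ completes the proof of the theorem.

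The main obstacle I expect is a subtle well-definedness / consistency issue rather than any real analytic difficulty. Algorithm~\ref{algo:non-const} requires each node $i$ to ``know that all later nodes are greedy'' and then to predict the map $L_i \mapsto (L_{i+1},\dots,L_n)$ that its downstream greedy neighbors will use. For this prediction to be unambiguous, any ties in the $\arg\min$ at each step must be broken by a rule that is common knowledge; for example, by lexicographic order on the truth table of each $L_j$. Once a fixed global tie-breaking rule is adopted, the induction hypothesis delivers a unique greedy tail as a function of the prefix, so the minimization in the recursion above is over a well-defined quantity, and no fixed-point issue arises. Apart from this bookkeeping, the argument is essentially Bellman's principle specialized to the present setting, which is why I would expect the formal write-up to be short.
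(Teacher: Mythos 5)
Your proof is correct, and at its core it is the same backward-induction / principle-of-optimality argument as the paper's, though the packaging differs. The paper argues by contradiction and exchange: it takes an arbitrary competitor $(L_1',\dots,L_n')$, locates the largest index $k$ at which $L_k'$ deviates from the greedy construction, replaces the tail from $k$ onward by the greedy tail, asserts that this cannot hurt (because ``greedy construction maximizes $\sum_{i=k}^n \Pr[c_i=b\mid c^{i-1}]$''), and then inducts on this deviation index (the ``degree'') until the competitor has been transformed into a fully greedy sequence with no worse performance. Your version instead makes the optimized quantity explicit as a value function $V_i$ and proves the Bellman recursion directly; note that the paper's pivotal assertion --- that the greedy tail maximizes the expected number of correct nodes among $\{k,\dots,n\}$ --- is precisely the statement that the greedy tail attains $V_k$, which the paper takes ``by definition'' but which your induction actually establishes. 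In that sense your write-up is the cleaner and more complete of the two. Your remark about tie-breaking also addresses a genuine well-definedness point that the paper glosses over: $GC(\cdot)$ is a single-valued map (so that node $i$ can unambiguously ``predict'' its successors) only once a common tie-breaking rule among minimizers is fixed, although any fixed rule yields an optimal sequence, so the theorem is unaffected.
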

%\begin{proof}
%See Appendix~\ref{subsec:local implies overall}.
%\end{proof}

%The scheme only offers an intuitive idea of what the optimal algorithms look like.
%It neither specifies the optimal algorithms, nor shows how to construct it efficiently.
%Yet to analyze $\mathcal{E}(L_1, \dotsb, L_n)$, we need to have the output algorithms expressed into simple and easily-calculable forms.
%The following subsections aim to present more explanations on the optimal algorithms.
%Further more, we will efficiently solve these constraints for meaningful topologies such as complete graphs.

\subsection {Optimal algorithms for complete graphs}\label{subsec:optimal algorithm}
In this section, we specify the optimal decision-making algorithms for complete graphs and thus provide a general lower bound for our model(by Theorem ~\ref{thm:general lower bound}).
Several intrinsic properties regarding information cascades in complete graph will also be presented.
%We start with three essential results that clarify the structure of optimal output algorithms for complete graphs.
%The first result shows that the optimal output algorithms reveal private signals under reveal set and do majority otherwise.
%The second and third results discuss what the reveal set is, and thus provide a clear definition of the optimal output algorithms.
%\begin{itemize}
%\item A node with optimal algorithm will either reveal \textbf{valid} information or perform Majority Algorithm on all previous outputs.
%\item Whether a decision vector $c^{i-1}$ is in the reveal set of $c_i$ depends only upon
%		the difference between the number of valid 0 and valid 1 in $c^{i-1}$.
%	More specifically, if the difference exceed a threshold $\delta_i$, the node performs Majority Algorithm.
%	Otherwise, the node reveals its own signal.
%\item An efficient methods for determining $\delta_i$ is provided, thus offering a explicit form of the optimal algorithms for complete graphs.
%\end{itemize}

We start with a lemma showing that optimal algorithm will either reveal \textbf{valid} information or perform Majority Algorithm on all previous guesses.
%	which contains several non-trivial factors.
%First, the Majority Algorithm is performed on previous outputs only and ignores the node's private signal.
%Also, the Majority Algorithm is performed on all previous outputs, instead of valid outputs only.
%The mathematical form of result one is presented in the following lemma.
\begin{lem}
\label{lem:reveal or majority}
In the optimal algorithm, a node either reveals \textbf{valid} information or apply Majority Algorithm on all previous outputs, i.e.
\begin{equation*}
L_i
= \left\{
	\begin{array}{c l}
		s_i & c^{i-1} \in RS_i \\
 		\Maj(c^{i-1}) & c^{i-1} \not \in RS_i
 	\end{array}
\right..
\end{equation*}
\end{lem}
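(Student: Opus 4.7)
The plan is a case analysis on whether $c^{i-1}\in RS_i$, combined with a short inductive argument at the end. The case $c^{i-1}\in RS_i$ is essentially the definition of the reveal set: by the very meaning of ``revealing valid information,'' we have $L_i(c^{i-1},s_i)=s_i$, so the first branch of the lemma is automatic.

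Suppose now $c^{i-1}\notin RS_i$. Then, since $L_i(c^{i-1},\cdot)$ is a function $\{0,1\}\to\{0,1\}$ which is not the identity, it is either the negation $s_i\mapsto 1-s_i$ or a constant. The negation map is strictly dominated by the identity on the local loss $\mathcal{E}_i$ (because $p>1/2$) while passing the same amount of information to later nodes, so it is never chosen by the optimal scheme; thus $L_i(c^{i-1},\cdot)=\alpha$ for some constant $\alpha\in\{0,1\}$. Crucially, this $\alpha$ is a deterministic function of $c^{i-1}$ together with the publicly known $L_i$, so any later node $j>i$ that sees $c^{i-1}$ already knows $c_i=\alpha$. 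Hence $c_i$ contributes no new information about $b$ to subsequent nodes, and the conditional distribution of $c_{i+1},\ldots,c_n$ given $c^{i-1}$ does not depend on $\alpha$. The only part of the global objective affected by $\alpha$ is therefore $\mathcal{E}_i=\Pr[\alpha\ne b\mid c^{i-1}]$, and Algorithm~\ref{alg:greedy construction} picks the MAP estimate of $b$ given $c^{i-1}$. By the Bayesian reasoning of Claim~\ref{clm:multiple is optimal} and Lemma~\ref{lem:selfMaj}, and because non-valid entries of $c^{i-1}$ are deterministic functions of earlier outputs, this MAP estimate equals $\Maj(\text{Valid}(c^{i-1}))$.

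To land on the exact form $\Maj(c^{i-1})$ stated in the lemma, I would then show by induction on $i$ that under the optimal algorithm $\Maj(c^{i-1})=\Maj(\text{Valid}(c^{i-1}))$. The inductive hypothesis gives that every non-valid $c_j$ with $j<i$ equals $\Maj(c^{j-1})=\Maj(\text{Valid}(c^{j-1}))$; thus each non-valid entry is a copy of the running valid-majority at its own creation time, and a short counting argument shows that inserting such copies cannot flip the majority of the full vector.

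The delicate point, and the main obstacle I anticipate, is exactly this last induction: naively, a long non-reveal streak could stack up many copies of an early valid reading and overwhelm several later, opposite-valued valid signals. Two ingredients should handle this. First, one fixes a tie-breaking convention under which the optimal algorithm always reveals at a tied valid count (revealing weakly dominates locally, since $1-p$ is no worse than the posterior error at a tie, and strictly dominates informationally for later nodes). Second, once the algorithm enters the non-reveal branch with a strictly decisive valid majority, the additional copies only reinforce the leading side, so any subsequent valid signal that disagrees is outweighed only after the cascade was already decisive. Making these two observations precise is what will need the most care in a rigorous write-up; the rest is book-keeping.
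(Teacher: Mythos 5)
Your first two steps track the paper's proof closely: the case $c^{i-1}\in RS_i$ is definitional, and outside the reveal set the paper likewise argues (its steps (1) and (2)) that the output is a constant in $s_i$ and that the optimal constant is the MAP estimate $\Maj(\text{Valid}(c^{i-1}))$, since later nodes can only extract information from valid entries. Your observation that a negating map is dominated is a welcome patch to a point the paper glosses over. The genuine gap is in your final step, converting $\Maj(\text{Valid}(c^{i-1}))$ into $\Maj(c^{i-1})$ --- and it is exactly the point you flag as delicate. Your proposed induction (``each non-valid entry is a copy of the running valid-majority, so inserting such copies cannot flip the majority'') is false if reveals and non-reveals can interleave: if the valid prefix is $(1,1)$ and three consecutive nodes then echo the majority, the full vector is $(1,1,1,1,1)$; three subsequent valid signals equal to $0$ make $\Maj(\text{Valid}(\cdot))=0$ while $\Maj(\cdot)$ of the full vector is still $1$. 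Neither of your two ``ingredients'' (tie-breaking at a tied valid count, and the reinforcement heuristic) rules this scenario out, because nothing in your argument prevents the optimal algorithm from returning to the reveal branch after a non-reveal streak.

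The missing idea is the paper's step (3): an exchange argument showing that in the optimal algorithm, once some node performs Majority, \emph{every} later node performs Majority (if node $i$ did Majority while node $i+1$ revealed, swapping their roles cannot increase $\mathcal{E}$, since revealing earlier benefits at least as many successors). This ``switching point'' structure means every reachable decision vector is a prefix of valid signals followed by a suffix of identical copies of that prefix's majority, and then $\Maj(c^{i-1})=\Maj(\text{Valid}(c^{i-1}))$ holds trivially on all histories that actually occur --- no counting over interleaved patterns is needed. Without this structural lemma your induction does not close; with it, your remaining bookkeeping becomes unnecessary. I would therefore replace your last paragraph with a proof of the cascade/switching-point property.
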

It is worth pointing out several non-trivial points of Lemma \ref{lem:reveal or majority} (See proof in Appendix \ref{subsec:reveal or majority}):
	a) the Majority Algorithm is performed on previous guesses only and ignores its own private signal;
	b) the Majority Algorithm is performed on \emph{all} previous guesses, not only on the valid guesses.
An established result in the proof of Lemma \ref{lem:reveal or majority} is that the Majority Algorithm will cascade on complete graphs, i.e. if a node performs Majority Algorithm, all later nodes will also perform Majority Algorithm.
This implies the existence of a switching point,  where all nodes prior to this point reveal their private signals,
	and all later nodes perform Majority Algorithm based on former nodes' signals.
If we can estimate the position of this switching point, then an estimation of $\mathcal{E}$ can be achieved.
%It is worth pointing out several non-trivial points of Lemma \ref{lem:reveal or majority} (See proof in Appendix \ref{subsec:reveal or majority}):
%	a) the Majority Algorithm is performed on previous guesses only and ignores its own private signal;
%	b) the Majority Algorithm is performed on \emph{all} previous guesses, not only on the valid guesses.
%A result that is established during the proof of Lemma is that: if a node performs Majority Algorithm, all later nodes will also perform Majority Algorithm.
%It implies the existence of a switching point,
%	where all nodes prior to this point reveal their signals,
%	and all later nodes perform Majority Algorithm based on former nodes' signals.
%If we can estimate the position of this switching point,
%	then an estimation of $\mathcal{E}$ can be achieved.

Lemma $\ref{lem:reveal or majority}$ specifies a node's action outside the reveal set.
However, to get an explicit representation of $L_i$, an understanding of the reveal set itself is required.
We introduce the following lemma that fills this gap.

Denote
$
	\diff(c^{i-1})
	=
	(\text{\# 1 in }\text{Valid}(c^{i-1})) - (\text{\# 0 in }\text{Valid}(c^{i-1})),
$
which serves as a criteria to measure the strength of valid information in previous decision vector $c^{i-1} = (c_1, \dotsb, c_{i-1})$.

\begin{lem}
\label{lem:reveal set threshold}
The reveal set of a node $i$ can be explicitly expressed with respect to some parameters $\delta_n(\cdot)$, where
	$RS_i = \{ c^{i-1} :|\diff(c^{i-1})| \ge \delta_n(i) \}$.
\end{lem}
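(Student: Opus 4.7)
The plan is to derive the threshold structure of $RS_i$ in three stages: a state-reduction to $\diff$, a symmetry argument, and a monotonicity step; the last is the main obstacle.

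\emph{State reduction.} By Lemma~\ref{lem:reveal or majority}, the greedy optimal action at node $i$ in Algorithm~\ref{alg:greedy construction} is either to output $s_i$ or to output $\Maj(c^{i-1})$, whichever minimizes the conditional suffix loss $\E[\sum_{j=i}^{n}\mathcal{E}_j \mid c^{i-1}]$. I would show by backward induction on $i$ that this suffix loss depends on $c^{i-1}$ only through $\diff(c^{i-1})$. The inductive step rests on two observations: valid outputs are, conditionally on $b$, i.i.d.\ noisy copies of $b$, so the Bayesian posterior $\Pr[b=1 \mid c^{i-1}]$ is determined by $\diff(c^{i-1})$ through the likelihood ratio $(p/(1-p))^{\diff(c^{i-1})}$; and non-valid outputs are deterministic functions of prior states and carry no further Bayesian content. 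Combined with the inductive hypothesis that each $L_j$ for $j > i$ depends on its input only through its own $\diff$, the future trajectory forms a Markov chain on $\diff$-states initialised at $\diff(c^{i-1})$, so the suffix loss is a function of $\diff(c^{i-1})$ alone.

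\emph{Symmetry.} The problem is invariant under the simultaneous relabelling $b \mapsto 1-b$, $s_j \mapsto 1-s_j$, $c_j \mapsto 1-c_j$, which sends $\diff \mapsto -\diff$ and preserves the greedy scheme. Hence the reveal indicator at $\diff = d$ coincides with that at $\diff = -d$, so it is a function of $|\diff|$ only.

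\emph{Monotonicity.} It remains to show that the sign of
\[
\Delta(d) \;=\; \E[\text{suffix loss} \mid \diff(c^{i-1})=d,\ \Maj] \;-\; \E[\text{suffix loss} \mid \diff(c^{i-1})=d,\ \text{reveal}]
\]
is monotone in $|d|$, so that the reveal region is a single interval and $RS_i$ takes the form $\{|\diff| \ge \delta_n(i)\}$. The heuristic is that $\Maj(c^{i-1})$ becomes exponentially accurate at predicting $b$ as $|d|$ grows while revealing always pays the fixed self-cost $1-p$, and one extra valid signal has vanishing marginal Bayesian value inside an already decisive cascade. I would make this precise by coupling the two continuations (reveal appends one $\pm 1$ step to the $\diff$-walk, Maj does not) and invoking the monotone-likelihood-ratio / log-concavity of the Bernoulli posterior to dominate the induced change in every future $\mathcal{E}_j$. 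Defining $\delta_n(i)$ as the sign-change point of $\Delta$ then yields the stated form of $RS_i$. The principal obstacle is controlling the coupled suffix losses uniformly in $|d|$, since the two continuations produce differently evolving $\diff$-walks and the reveal/Maj decisions at all later nodes may flip between them; a careful monotone coupling of the two $\diff$-chains, together with the Markov structure established in the first step, should suffice.
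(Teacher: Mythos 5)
Your proposal follows essentially the same route as the paper's own proof: reduce the dependence on $c^{i-1}$ to the single statistic $\diff(c^{i-1})$ via the Bayesian posterior $\Pr[b\,|\,c^{i-1}] $ (which is exactly Equation~\eqref{equ:bayesian} in the paper), and then argue that the reveal/majority decision is determined by a threshold in $|\diff|$. Where you differ is in rigor, and the difference cuts in your favor: the paper's one-paragraph proof jumps from ``the posterior in $b$ is monotone in $|\diff|$'' directly to ``hence a threshold $\delta_n(i)$ exists,'' which conflates monotonicity of the \emph{posterior} with monotonicity of the \emph{difference of expected suffix losses} between the two actions. You correctly isolate this as the actual content of the lemma (your $\Delta(d)$ having a single sign change), and you also make explicit the backward-induction/Markov-chain argument needed to justify that the suffix loss is a function of $\diff$ alone --- something the paper only implicitly relies on via its recursion for $\mathcal{E}(i,d)$ in Appendix~\ref{subsec:algorithm}. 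The one genuine gap in your write-up is that the monotonicity step is still a sketch: the coupling of the two $\diff$-walks and the claim that MLR/log-concavity dominates every future $\mathcal{E}_j$ is plausible but not carried out, and this is precisely the delicate point (later nodes' reveal/majority decisions can differ between the coupled trajectories). You should be aware, however, that the paper's own proof does not close this gap either --- it asserts the threshold structure from the posterior monotonicity alone --- so your proposal is, if anything, a more honest account of what remains to be proved. As a minor observation, the lemma as stated ($RS_i=\{|\diff|\ge\delta_n(i)\}$) is inconsistent with the proof's conclusion that Majority is applied when $|\diff|\ge\delta_n(i)$; the reveal set should be the complementary region where $|\diff|$ is small, and your $\Delta(d)$ formulation gets the orientation right.
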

\begin{proof}
This lemma follows from the fact that a node outputs based on the Bayesian probability for the ground truth bit $b$,
	which depends solely upon $\diff(c^{i-1})$.
Given $c^{i-1}$, the Bayesian probability for $b$ is
\begin{equation}
\label{equ:bayesian}
\left\{
\begin{array} {l}
	\Pr[b = 0|c^{i-1}]
	=\frac{1}{1 + ({p\over 1-p})^{\diff(c^{i-1})}}
\\
	\Pr[b = 1|c^{i-1}]
	= \frac{1}{1 + ({1 - p\over p})^{\diff(c^{i-1})}}
\end{array}
\right..
\end{equation}
Lemma $\ref{lem:reveal or majority}$ implies that for each node,
	a) if the previous decision vector convinces it that $b$ equals $\Maj(c^{i-1})$ with high probability,
		it follows the majority of former output decisions;
	b) otherwise, it tries to provide more information by revealing its own private signal.
As implied by Equation $\eqref{equ:bayesian}$, the larger $|\diff(c^{i-1})|$ is, the more likely $b = \Maj(\text{Valid}(c^{i-1}))$.
This lead us to conclude the existence of a threshold $\delta_n(i)$ such that $L_i$ applies Majority Algorithm if and only if $|\diff(c^{i-1})| \ge \delta_n(i)$.
\end{proof}

Finally, given $i$ and $n$ as input, we show how to efficiently derive $\delta_n(i)$ in average $O(\log n)$ time.
Denote $\mathcal{E}(i,d)$ to be the expected number of wrong nodes given that $|\diff(c^{i-1})| = d$.
The idea is to use recursion to derive $\mathcal{E}(i,d)$ for all $i$ and $d$, in the process of which $\{\delta_n(i)|i\}$ may be calculated.
If a node $k$ chooses to reveal its private signal, $\mathcal{E}(k, d)$ is updated as
\begin{equation*}
	\mathcal{E}(k,d)
=
 	q_1\cdot \mathcal{E}(k+1, d+1) + (1-q_1)\cdot \mathcal{E}(k+1, d-1),
\end{equation*}
where $q_1$ is the probability that node $k$'s private signal matches the majority of former guesses.
Similarly, if node $k$ chooses to do Majority Algorithm, $\mathcal{E}(k, d)$ is updated as
\begin{equation*}
	\mathcal{E}(k,d)
=
	q_2\cdot \left(n - {k+d\over 2}\right) + (1-q_2)\cdot {k+d\over 2},
\end{equation*}
where $q_2$ is the probability that the majority of former outputs is correct.
Therefore, we can calculate $\{\mathcal{E}(i,d)\}$ in time $O(n^2)$.
A further improvement can be made by exploiting the properties of $\delta_n()$:
$
\delta_n(i+1)-1\le \delta_n(i)\le \delta_n(i+1)+1,
$
Thus to calculate $\{\delta_n(i)\}$, it suffices to calculate $\{\mathcal{E}(i,d)~|~d < \delta_n(i), i \ge n\}$,
	which requires only $n\cdot \max_{i}\{\delta_n(i)\} =O( n\log n)$ time complexity.
Please see Appendix \ref{subsec:algorithm} for detailed derivation and algorithms.

\begin{lem}
Given $n$ as input, we can calculate the set $\{\delta_n(i)\}$ in $O(n\log n)$ time.
\end{lem}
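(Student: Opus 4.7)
The plan is to compute $\{\delta_n(i)\}$ by a backward dynamic program whose state is the pair $(i, d)$ with $d = |\diff(c^{i-1})|$. By Lemma~\ref{lem:reveal set threshold} the Bayesian-optimal action at every node depends only on $d$, so such a sweep is well-defined. At each state the node chooses the minimum-loss option between revealing (whose cost recurses as $q_1\,\mathcal{E}(i+1, d+1) + (1-q_1)\,\mathcal{E}(i+1, d-1)$) and performing majority (which, by the cascading observation established in Lemma~\ref{lem:reveal or majority}, collapses to the closed form $q_2(n - (i+d)/2) + (1-q_2)(i+d)/2$ since no further recursion is needed). I would extract $\delta_n(i)$ as the smallest $d$ for which the majority option is no worse than the reveal option.

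To beat the naive $O(n^2)$ table fill and achieve $O(n \log n)$, I would only store and update the entries $\{\mathcal{E}(i, d) : 0 \le d < \delta_n(i)\}$, because for $d \ge \delta_n(i)$ the value is already the explicit majority formula and does not need to be recomputed. The sweep proceeds from $i = n$ down to $i = 1$, and relies on the adjacency relation $\delta_n(i+1) - 1 \le \delta_n(i) \le \delta_n(i+1) + 1$ to restrict the search for $\delta_n(i)$ to a constant-sized window around $\delta_n(i+1)$. The total work is then $\sum_{i=1}^n \delta_n(i) \le n \cdot \max_i \delta_n(i)$, and I would invoke a uniform bound $\max_i \delta_n(i) = O(\log n)$ to conclude.

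The main obstacle is to justify the two structural inequalities on which the pruning rests. For the adjacency relation I plan to argue that the optimal action at layer $i$ is a function only of the posterior $\Pr[b \mid c^{i-1}]$, which by Equation~(\ref{equ:bayesian}) depends only on $d$; since extending the prefix by a single node can shift $\diff$ by at most one, the optimal crossover threshold can move by at most one as $i$ advances. For $\max_i \delta_n(i) = O(\log n)$ I would again use Equation~(\ref{equ:bayesian}): when $|\diff(c^{i-1})| = d$, the posterior probability of a wrong majority is at most $((1-p)/p)^{d}$, which drops below $1/n$ once $d = \Theta(\log n)$. At that point the expected downstream cost of majority is $o(1)$, strictly below the $\Omega(1-p)$ cost incurred by one more reveal, so the optimum must choose majority and $\delta_n(i) = O(\log n)$ uniformly in $i$. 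Combining these two facts with the pruned sweep yields the claimed $O(n \log n)$ running time; the remaining routine work is bookkeeping the sparse storage near the moving threshold, which I would defer to the appendix as the paper indicates.
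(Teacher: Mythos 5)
Your proposal follows essentially the same route as the paper's Appendix \ref{subsec:algorithm}: a backward dynamic program over states $(i,d)$ with the same two update rules, pruned to the entries below the reveal threshold by using the adjacency relation $\delta_n(i+1)-1\le\delta_n(i)\le\delta_n(i+1)+1$ together with the bound $\max_i\delta_n(i)=O(\log n)$, yielding $O(n\log n)$ total work. Your added sketches justifying the adjacency relation and the logarithmic bound on the threshold are in the right spirit (the paper itself only asserts these, citing induction and the Bayesian formula), so the proposal matches the paper's argument.
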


\subsection{General lower bound for our model}
Finally we analyze the expected number of wrong nodes for the optimal algorithms in complete graphs,
	and provide a general $\Theta(\log n)$ lower bound for the model.
	
\begin{thm}
\label{thm:general lower bound}
The expected number of wrong nodes $\mathcal{E}$ for any topology and any algorithm is at least $\Theta(\log n)$.
\end{thm}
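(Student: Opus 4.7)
The plan is to invoke Theorem~\ref{thm:complete graphs bound} to reduce the problem to bounding $\mathcal{E}$ for the optimal algorithm on the complete graph, and then to exploit the threshold characterization given by Lemmas~\ref{lem:reveal or majority} and~\ref{lem:reveal set threshold}. By Lemma~\ref{lem:reveal set threshold}, node $i$ reveals its private signal iff $|\diff(c^{i-1})| < \delta_n(i)$, and otherwise performs Majority; moreover, as noted after Lemma~\ref{lem:reveal or majority}, Majority cascades on the complete graph, so once a node performs Majority, every subsequent node emits the same decision. Let $T$ denote the first index at which the reveal condition fails, i.e.\ $|\diff(c^{T-1})| \ge \delta_n(T)$. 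Then nodes $1,\ldots,T-1$ all reveal and nodes $T,\ldots,n$ all output the common guess $\mathrm{sgn}(\diff(c^{T-1}))$.

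I would then split $\mathcal{E} = \sum_{i<T}\mathcal{E}_i + \sum_{i\ge T}\mathcal{E}_i$. The first sum is at least $(1-p)\cdot\E[T-1]$ by Wald's identity, since each revealed signal is independently wrong with probability $1-p$. For the second sum, let $\delta := \delta_n(T)$. The revealed private signals form a biased $\pm 1$ random walk with drift $2p-1>0$; by standard gambler's ruin, starting from $0$, this walk hits $-\delta$ before $+\delta$ with probability $\Theta\bigl((q/p)^\delta\bigr)$, where $q=1-p$. Consequently the cascading tail contributes at least $\bigl(n-\E[T-1]\bigr)\cdot\Theta\bigl((q/p)^\delta\bigr)$ wrong outputs in expectation.

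The bound then follows from a two-case analysis on the magnitude of the crossing threshold $\delta$. If the algorithm's threshold at the crossing is $\Omega(\log n)$ with constant probability, then $T-1\ge \delta = \Omega(\log n)$ (since the walk changes by at most one per revealing step), so the reveal sum alone already gives $\Omega(\log n)$. If instead $\delta = o(\log n)$, then $(q/p)^\delta = n^{-o(1)}$ and the cascading tail contributes $n^{1-o(1)} = \omega(\log n)$ wrong outputs. In either case $\mathcal{E} = \Omega(\log n)$, which transfers to every topology by Theorem~\ref{thm:complete graphs bound}.

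The main obstacle is the interaction between the \emph{random} stopping time $T$ and the \emph{deterministic but time-varying} threshold sequence $\{\delta_n(i)\}$: because the optimal algorithm is free to tune $\delta_n(i)$ with $i$, one cannot simply substitute a single $\delta$ into the analysis. A clean way around this is to show that the total expected loss is bounded below by an expression of the form $\min_\delta\bigl[(1-p)\delta + n(q/p)^\delta\bigr]$, whose minimum over $\delta$ is $\Theta(\log n)$, and then to argue that a time-varying threshold cannot beat this static optimum by more than a constant factor---essentially that spreading the accumulation of evidence over multiple thresholds cannot simultaneously reduce both the reveal cost and the cascade-error cost. Establishing this ``no free lunch'' between the two loss components is the technical crux.
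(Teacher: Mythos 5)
Your overall route is the same as the paper's: reduce to the complete graph via Theorem~\ref{thm:complete graphs bound}, use the reveal-then-cascade structure from Lemma~\ref{lem:reveal or majority} to define a switching point, and run a two-case analysis trading the reveal cost $(1-p)\cdot\E[\text{switching position}]$ against the expected loss from a wrong cascade. The gap you flag at the end --- the interaction between the random stopping time $T$ and the time-varying thresholds $\delta_n(i)$ --- is real in your formulation, but the paper closes it with a simpler observation than the ``no free lunch over a static $\delta$'' argument you sketch: condition on the realized switching position $m$ itself rather than on the threshold value at the crossing. When the switch occurs at position $m$, at most $m$ signals have been revealed, so $|\diff(c^{m-1})|\le m$ regardless of what the thresholds $\delta_n(\cdot)$ are; by Equation~\eqref{equ:bayesian} the conditional probability that the ensuing cascade is wrong is then at least $\bigl(1+(p/(1-p))^{m}\bigr)^{-1}$, which exceeds $n^{-1/2}/2$ whenever $m<\tfrac12\log_{p/(1-p)}n$. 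Hence for every realization either $m\ge\tfrac12\log_{p/(1-p)}n$, and the revealed signals alone contribute $(1-p)m=\Omega(\log n)$ expected errors, or the conditional cascade loss is at least $(n-m)\cdot n^{-1/2}/2=\Omega(\sqrt n)$; the minimum of the two is $\Omega(\log n)$, with no optimization over $\delta$ and no comparison between static and time-varying thresholds needed. So your argument is on the right track but incomplete as written; replacing the case split on $\delta$ by a case split on $m$, together with the trivial bound $|\diff|\le m$ at the switch, is the missing ingredient, and your Wald/gambler's-ruin machinery then becomes unnecessary.
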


\begin{proof}
It suffices to prove that
	the $\mathcal{E}$ of the optimal algorithms in complete graph is bounded by $\Theta(\log n)$.
In the proof of Lemma \ref{lem:reveal or majority}, we develop the concept of a ``switching point'',
	where all nodes prior to this point reveal valid information and all nodes afterwards perform Majority Algorithm.
Denote $m$ as a random variable of the switching point's position.
We prove that at least one of the following happens:
a) $m\geq (\log_{p/(1-p)}n) / 2$;  or
b) $\mathcal{E}$ is greater than $\sqrt{n}/2$.

If $m < (\log_{p/(1-p)}n) / 2$, then from Equation $\eqref{equ:bayesian}$, we know that the majority of revealed signals are wrong
	with probability at least
	$
	({1 + (p/(1-p))^m})^{-1}
	>
	{n^{-0.5}}/{2}.
	$
So the expected number of wrong nodes is lower bounded by $ {(n-m)} / {\sqrt{n}}$ which is asymptotically greater than ${\sqrt{n}}/2$.
Therefore, for the optimal output algorithms in complete graph, $\mathcal{E}$ is at least
\begin{equation*}
	\min \Big((1 - p) \log_{p/(1-p)}n \Big/ 2,  \sqrt{n} \Big/2 \Big)
	=
	\Omega(\log n).
\end{equation*}
\end{proof}

\section {Optimal K-Layer Topology}\label{sec:k layer}

In section 3, we show that the optimal $\mathcal{E}$ for random graphs is $\Theta(\log n)$, which is asymptotically the same as the general lower bound in Theorem $\ref{thm:general lower bound}$.
Yet the question remains what is the actual optimal topology for the Majority Algorithm.

In this section, we propose a family of layer graphs and search for the optimal topology among this family.
We claim, without proof, that the optimal topology of layer graphs is actually the optimal topology for Majority Algorithm.

\begin{figure}
\centering
\includegraphics[scale=0.3]{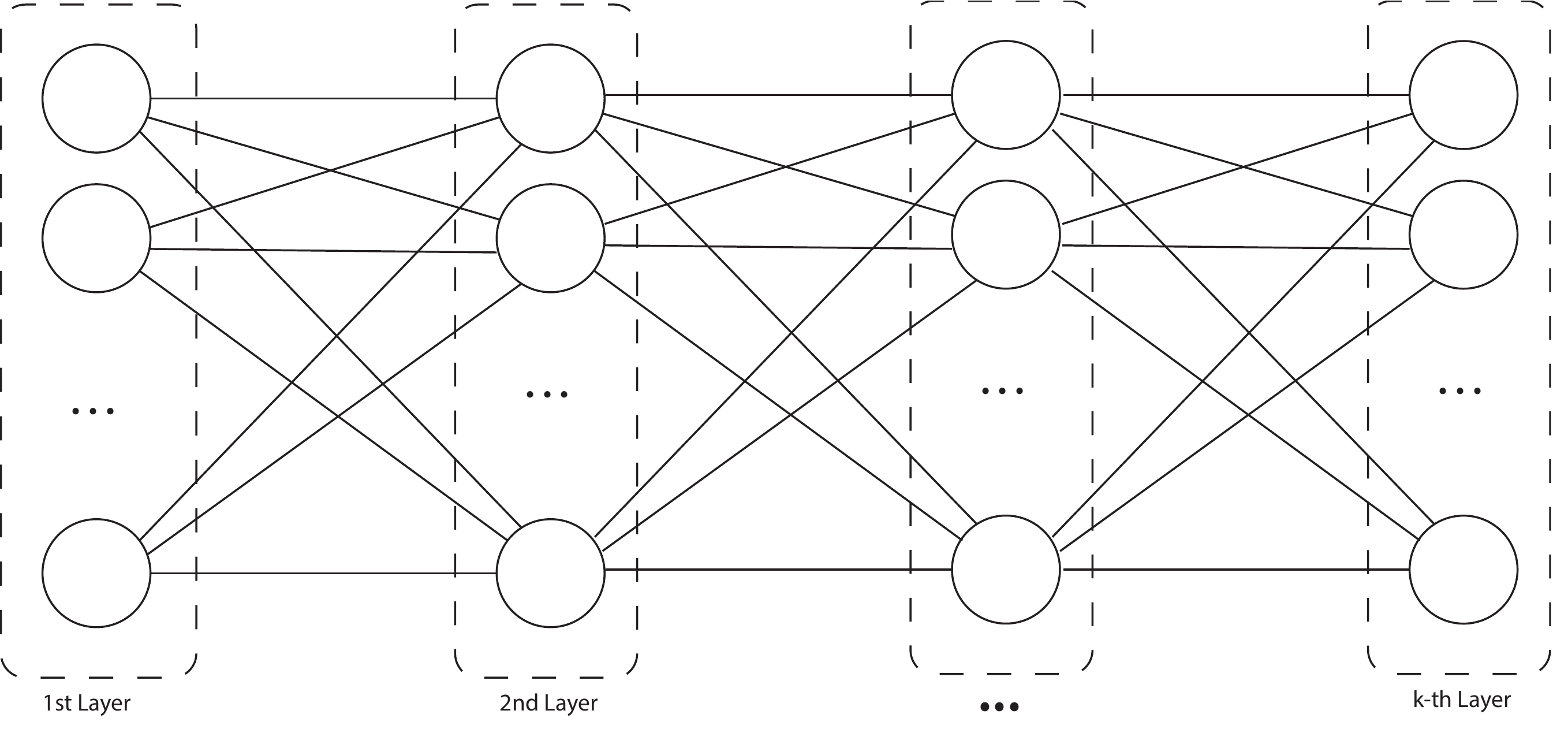}
\caption{A $k$-layer graph}
\label{fig:K-Layer}
\end{figure}

To find the optimal topology, it helps to first understand the hidden insights behind small overall $\mathcal{E}$.
In the optimal algorithms for complete graphs, nodes first judge the strength of the current cascade,
	and then decide whether to follow the cascade or reveal their own signals to strengthen the cascade.
Such a \emph{think-before-acting} way of decision-making guarantees the correctness probability of any established cascade, and thus results in good overall performance.
We hope to know whether such \emph{think-before-acting} could make it possible for Majority Algorithm to achieve optimality simply by adjusting the topology.
This inspires us the following definition of layer graphs.

\begin{defn} {\bf (Definition of layer graphs)}
\\
A graph is said to have $k$ layers if it can be separated into $k$ disjoint groups, $S_1, \dotsb, S_k$, where any node in group $S_i$ is connected to and only to all nodes in $S_{i-1}$. See Figure $\ref{fig:K-Layer}$ for an example.
\end{defn}

\begin{rmk}
\label{rmk2}
Given a $k$-layer graph $G$, we consider how nodes perform in $G$.
First of all, similar to the optimal algorithms in complete graph,
	we have $|S_1|$ many nodes revealing \textbf{valid} information at the very front.
If there exists a cascade in $S_1$ (the number of one choices outmatches another by at least two),
	then all later nodes follow this cascade.
Otherwise, nodes in $S_2$ reveal their private signals.
This process continues until a cascade happens in some layer.
In this sense, layer graphs do contains the think-before-acting way of decision-making.
\end{rmk}

In the following sections, we find the optimal topology among layer graphs,
	and show that the expected number of wrong nodes $\mathcal{E}$ for such optimal topology is also $\Theta(\log n)$.
This optimal $\mathcal{E}$ will be compared to previous bound and results,
	from which we will be able to glimpse the limit of Majority Algorithm.
Throughout this section, if not otherwise mentioned, we will assume that the output algorithm is Majority Algorithm.
We denote the expected number of wrong nodes on a $k$-layer topology $(S_1, \dotsb, S_n)$ as $\mathcal{E}(|S_1|, \dotsb, |S_n|)$.

\subsection {Optimal topology for layer graphs}

Remark $\ref{rmk2}$ provides an intuitive way to calculate the $\mathcal{E}$ of any layer graph.
Given $i$ independent signals, we denote $p_w(i)$ as the probability that these signals generate a wrong cascade,
	and $p_n(i)$ as the probability that these signals does not generate cascade in either side.
A recursion regarding $\mathcal{E}$ of any layer graph can be shown to be:
\begin{equation}
\label{equ:recursive}
	\mathcal{E}(a_1, \dotsb, a_k)
=
	(1-p)\cdot a_1 + p_w(a_1)\cdot (n - a_1) + p_n(a_1)\cdot \mathcal{E}(a_2, \dotsb, a_k),
\end{equation}
where $(1-p)a_1$ is the expected number of wrong nodes among the first layer,
	$n-a_1$ is the expected number of wrong nodes among later layers under wrong cascade,
	and $\mathcal{E}(a_2, \dotsb, a_k)$ is the expected number of wrong nodes of later layers under no cascade.
By extending the recursive term, we can simplify Equation $\eqref{equ:recursive}$ into
\begin{equation}
\label{equ:loss of k-layer}
	\mathcal{E}(a_1, \dotsb, a_k)
=
	\sum_{i=1}^{k} \Big(
		\prod_{j=1}^{i-1}p_n(a_j) \cdot \Big(  (1-p)\cdot a_i + p_w(a_i)\cdot (n - \sum_{j=1}^{i-1}a_j) \Big)
	\Big).
\end{equation}
\begin{rmk}
\label{rmk:estimation}
Our goal is to estimate $\mathcal{E}$ of the optimal layer graph
	to the $\Theta(\log n)$ level.
Any approximation of $\mathcal{E} + o(\log n)$ would be satisfying.
This relaxation releases us from getting an exact optimum and allows us to make proper adjustments that greatly reduce the difficulty of the calculation.
For example, in the derivation of Equation $\eqref{equ:recursive}$ and $\eqref{equ:loss of k-layer}$, we assume without loss of generality that each layer has an even number of nodes.
This will result in $O(1)$ changes in the optimal parameters, which is tolerable.
\end{rmk}

To find a set of parameters $\{k, a_1, \dotsb, a_k\}$ such that $\mathcal{E}(a_1, \dotsb, a_k)$ is minimized, we need the following basic steps:
\begin{itemize}
\item We first show that in Equation $\eqref{equ:recursive}$,
		the contribution of $p_n(a_1)\mathcal{E}(a_2, \dotsb, a_k)$ is limited and
		may be discarded without much change to the optimal parameters.
	Therefore, it suffices to consider the optimization of $a_1$ in the equation
	\begin{equation}
	\label{equ:equ1}
		\arg\min_{a_1} f(a_1) = \arg\min_{a_1} \Big( (1-p)\cdot a_1 + p_w(a_1)\cdot (n - a_1) \Big).
	\end{equation}
\item We then solve the equation $f(x+1) - f(x) = 0$, which has a unique solution.
	It can be shown that $f(a_1)$ first decreases then increases with respect to $a_1$.
	Thus the solution for $f(x+1) - f(x) = 0$ offers an approximation to the optimal $a_1$ with only $O(1)$ error.
\item After solving the optimal size of the first layer, we can apply this method recursively to calculate the optimal size of all layers.
\end{itemize}

The proof for the above three results are complex and brute-force.
We list the formal lemmas and theorems here and leave the detailed proof to the appendix.
First, we present a lemma that addresses result 2.
It is worth pointing out that Equation $\ref{equ:equ1}$ is the expected loss when we have only two layers,
	with the first layer of size $a_1$ and the second layer of size $n - a_1$.
Therefore, result 2 is equivalent to finding an optimal topology among two layer graphs.
For convenience, we denote $s = 1 / (4p(1-p))$.

\begin{lem}
\label{lem:twolayer}
For the optimal topology among two layer graphs, its first layer has size
\begin{equation*}
\log_{s}n-\log_s( \log_s n)/2 + O(1).
\end{equation*}
\end{lem}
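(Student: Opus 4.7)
The plan is to treat Equation~\eqref{equ:equ1} as a one-variable minimization over a real parameter $a_1$, establish a sharp asymptotic for $p_w(a_1)$, and then solve the resulting transcendental first-order condition up to an $O(1)$ error.

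The first (and key) step is to obtain a tight estimate for $p_w(a_1)$, the probability that among $a_1$ independent $\mathrm{Ber}(p)$ signals the wrong value wins by a margin of at least two. Writing $p_w(a_1) = \Pr[\mathrm{Bin}(a_1,p) \le a_1/2 - 1]$ and applying Stirling's formula to the dominant binomial coefficient near $a_1/2$, together with a geometric-series bound on the remaining tail (consecutive terms shrink by the fixed factor $(1-p)/p < 1$), should yield an asymptotic of the form $p_w(a_1) = \Theta(s^{-a_1/2}/\sqrt{a_1})$ up to an explicit multiplicative constant depending only on $p$. This is the place where the parameter $s$ enters the analysis.

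The second step is to analyze the shape of $f(a_1) = (1-p) a_1 + p_w(a_1)(n - a_1)$. The linear term is increasing while the second term decreases essentially geometrically in $a_1$, so I will verify via the discrete difference $\Delta f(a_1) = f(a_1+1) - f(a_1)$ that $f$ is unimodal: the ratio $p_w(a_1+1)/p_w(a_1)$ tends to $s^{-1/2} < 1$, which makes $\Delta f$ strictly increasing and thus gives exactly one sign change. The integer optimum $a_1^\ast$ is therefore characterized by $\Delta f(a_1^\ast) \le 0 \le \Delta f(a_1^\ast + 1)$, which after substituting the asymptotic for $p_w$ from step~1 collapses to the balance condition
\begin{equation*}
    p_w(a_1^\ast)\,(n - a_1^\ast) \;=\; \Theta(1).
\end{equation*}

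The third step is to solve this implicit equation asymptotically. Plugging in the asymptotic for $p_w$ gives an equation of the form $s^{a_1^\ast}\sqrt{a_1^\ast} = \Theta(n)$ (with $s$ as in the lemma). Taking $\log_s$ bootstraps as follows: the exponential factor forces $a_1^\ast = \Theta(\log_s n)$, and plugging this crude estimate back into the $\log_s a_1^\ast$ term replaces it by $\log_s(\log_s n) + O(1)$, yielding
\begin{equation*}
    a_1^\ast \;=\; \log_s n - \tfrac12 \log_s(\log_s n) + O(1),
\end{equation*}
as claimed. The $O(1)$ slack absorbs the leading-order multiplicative constant hidden in the $\Theta$-asymptotic of $p_w$, the difference between $n$ and $n - a_1^\ast$, the $p_n$-term tacitly dropped per Remark~\ref{rmk:estimation}, and the rounding from the real to the integer optimum.

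The hardest step, I expect, is step~1. A crude Chernoff bound recovers the exponential factor $s^{-a_1/2}$ but loses the $1/\sqrt{a_1}$ polynomial correction, and that correction is precisely what produces the $-\tfrac12 \log_s(\log_s n)$ term in the final answer. A Stirling estimate with explicit constants, combined with the geometric tail bound, is what pins the asymptotic down tightly enough for the $O(1)$ claim to hold; the unimodality check in step~2 and the iterated-logarithm calculation in step~3 are then routine.
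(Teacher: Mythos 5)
Your strategy is essentially the paper's: estimate $p_w$ by Stirling's formula applied to the central binomial coefficient plus a geometric bound on the remaining tail (ratio $(1-p)/p<1$), establish unimodality of $f$ through its first difference, and solve the resulting balance condition $p_w(a_1^\ast)(n-a_1^\ast)=\Theta(1)$ by bootstrapping the logarithm. The paper packages the first-order condition a little differently---it writes $a_1=2k+1$, computes $p_w(2k+1)-p_w(2k+3)$ exactly so that a single term $\tfrac{2k+1}{k+1}\binom{2k}{k}[p(1-p)]^{k+1}(2p-1)$ survives, and shows the resulting function $R(k)=(n-2k-3)[4p(1-p)]^k/\sqrt{\pi k}$ is strictly decreasing and hence crosses the constant band $[C_1,C_2]$ exactly once---but that is your balance condition in disguise, and your step 1 correctly identifies the $1/\sqrt{a_1}$ polynomial correction that is responsible for the $-\tfrac12\log_s(\log_s n)$ term. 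One minor technical caveat: for unit steps $a_1\to a_1+1$ the ratio $p_w(a_1+1)/p_w(a_1)$ oscillates with the parity of $a_1$ (because a "cascade" requires a margin of two), so "strictly increasing $\Delta f$" is cleanest if you step by $2$ with $a_1$ odd, as the paper does, invoking Remark \ref{rmk:estimation} to absorb the rounding.

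There is, however, a concrete inconsistency between your steps 1 and 3 that is not absorbed by the $O(1)$. From $p_w(a_1)=\Theta\bigl(s^{-a_1/2}/\sqrt{a_1}\bigr)$ and $p_w(a_1^\ast)(n-a_1^\ast)=\Theta(1)$ the implicit equation is $s^{a_1^\ast/2}\sqrt{a_1^\ast}=\Theta(n)$, not $s^{a_1^\ast}\sqrt{a_1^\ast}=\Theta(n)$ as you wrote; solving the correct one gives $a_1^\ast=2\log_s n-\log_s(\log_s n)+O(1)$, i.e.\ twice the leading term you report. The dropped factor of $2$ in the exponent is exactly what makes your final line agree with the lemma, so as written step 3 is a non sequitur. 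In fact this mirrors an ambiguity in the paper itself: its proof parametrizes the first layer as $2k+1$ and solves for the \emph{half-size} $k=\log_s n-\tfrac12\log_s(\log_s n)+O(1)$, which is the formula appearing in the lemma statement, while the layer size is $a_1=2k+1$. A consistent execution of your plan therefore reproduces the paper's actual computation (the balance point at $s^{-k}n/\sqrt{k}=\Theta(1)$ with $a_1=2k+1$), but you must either correct the exponent in step 3 and state the answer for $a_1^\ast$ as $2\log_s n-\log_s(\log_s n)+O(1)$, or make explicit that the displayed formula is the half-size $k$; as it stands the derivation does not support the equation you solve.
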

\begin{proof}
See Appendix $\ref{subsec:two-layer optimal}$.
\end{proof}

\begin{thm}
\label{thm:optimal layer topology}
The optimal layer topology has $k = n/\log_s n + o(n/\log_s n)$ many layers.
The first layer has $a_1 = \log_{s}n$ many nodes.
The size of layer $i$ may be written as a recursion of the size of layer $i-1$,
\begin{equation}
\label{equ:optimal parameters}
	a_i
	\sim
	\log_s(s^{a_{i-1}}-a_{i-1}).
\end{equation}
In other words, the optimal topology satisfies the following structural properties:
\begin{itemize}
\item The sizes of layers gradually decrease, and the number of layers with size $\log_s n - i$ is ${(s-1)n / (s^{i+1}(\log_{s}n - i))}$.
\item The first ${(s-1)n / (s\log_{s}n)}$ layers have size $\log_{s} n$.
\item The following ${(s-1)n / (s^{2}(\log_{s}n - 1))}$ layers have size $\log_s n - 1$, and so on.
\end{itemize}
\end{thm}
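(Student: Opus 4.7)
The plan is to use the three-step outline stated immediately before the theorem: first reduce the optimization of the leading layer to the two-layer problem of Equation~(5), then apply Lemma~\ref{lem:twolayer} to pin down $a_1$, and finally iterate the whole argument on the residual mass $n_i := n - \sum_{j<i} a_j$ to get the full sequence $(a_i)$ and the layer count $k$. Because we are only asked to estimate the answer to $o(\log n)$ precision (Remark~\ref{rmk:estimation}), throughout the proof I will freely drop additive $O(1)$ and multiplicative $(1+o(1))$ factors that arise from rounding $a_i$ to even integers and from lower-order terms in $p_w$ and $p_n$.

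First, I would show $a_1 \sim \log_s n$. Using the promised Result~1, the term $p_n(a_1)\cdot \mathcal{E}(a_2, \dots, a_k)$ in Equation~\eqref{equ:recursive} is negligible relative to $(1-p)a_1 + p_w(a_1)(n-a_1)$ for the range of $a_1$ that is relevant. Thus optimizing $a_1$ reduces to the two-layer problem $\arg\min_{a_1} f(a_1)$ of Equation~\eqref{equ:equ1}, and Lemma~\ref{lem:twolayer} gives $a_1 = \log_s n - \tfrac{1}{2}\log_s(\log_s n) + O(1)$, which in particular satisfies $a_1 = \log_s n$ to leading order. To obtain the recursion~\eqref{equ:optimal parameters}, I observe that Equation~\eqref{equ:loss of k-layer} is self-similar: given that the first layer has failed to produce a cascade, the loss contributed by layers $2, \dots, k$ is exactly $\mathcal{E}(a_2, \dots, a_k)$ on the residual problem with $n_2 = n - a_1$ nodes. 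So the same first-layer analysis gives $a_i \sim \log_s n_i$, and since $n_{i-1} = s^{a_{i-1}}$ to leading order, $n_i = n_{i-1} - a_{i-1} = s^{a_{i-1}} - a_{i-1}$, which yields
\[
  a_i \;\sim\; \log_s n_i \;=\; \log_s\bigl(s^{a_{i-1}} - a_{i-1}\bigr).
\]

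For the structural properties and the count of $k$, I would solve the discrete dynamics $n_{i+1} = n_i - \log_s n_i$. Writing $c := \log_s n_i$, the value $n_i$ decreases by $c$ at each step, so the number of consecutive layers for which $a_i = c$ is approximately $(s^c - s^{c-1})/c = (s-1)s^{c-1}/c$; substituting $c = \log_s n - j$ gives exactly the $(s-1)n / (s^{j+1}(\log_s n - j))$ claimed in the bullet points. Summing this count over $j = 0, 1, 2, \dots$ yields a geometric series that telescopes to $k = \frac{n}{\log_s n}\cdot \frac{s-1}{s} \sum_{j \geq 0} s^{-j} + o(n/\log_s n) = n/\log_s n + o(n/\log_s n)$; the tail $j = \Omega(\log_s n)$ contributes only $o(n/\log_s n)$ because $s^{-j}$ is summable. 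The main obstacle, in my view, is carefully justifying the reduction in Step~1 — that is, proving Result~1 that $p_n(a_1)\mathcal{E}(a_2, \dots, a_k)$ is truly negligible, since $\mathcal{E}(a_2, \dots, a_k)$ is itself defined by the very recursion we are trying to solve. The cleanest route is an induction on the residual mass $n_i$: the base case is Lemma~\ref{lem:twolayer}, and the inductive step uses $\mathcal{E}(a_2, \dots, a_k) = \Theta(\log n_2) = \Theta(\log n)$ together with the bound $p_n(\log_s n) = o(1)$, so that $p_n(a_1)\mathcal{E}(a_2, \dots, a_k) = o(\log n)$, well inside the $O(1)$ slack we are allowed per Remark~\ref{rmk:estimation}. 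Ensuring these slack terms do not accumulate super-logarithmically across the $\Theta(n/\log n)$ inductive steps is the technical subtlety to watch.
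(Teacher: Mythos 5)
Your proposal follows essentially the same route as the paper's proof in Appendix C.2: reduce the choice of $a_1$ to the two-layer optimization of Equation~\eqref{equ:equ1} by showing $p_n(a_1)\mathcal{E}(a_2,\dotsb,a_k)$ is negligible (using the a priori $\Theta(\log n)$ bound on the optimal loss, which the paper likewise assumes), invoke Lemma~\ref{lem:twolayer}, and iterate on the residual $n_i = n - \sum_{j<i}a_j$. The only difference is that you explicitly carry out the counting of layers of each size via the dynamics $n_{i+1}=n_i-\log_s n_i$ and the geometric summation giving $k = n/\log_s n + o(n/\log_s n)$, a step the paper compresses into ``apply this argument repeatedly''; your worked-out version is correct and consistent with the theorem's bullet points.
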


\begin{proof}
See Appendix $\ref{subsec:optimality all layers}$.
\end{proof}

We believe that the layer topology provided in Theorem $\ref{thm:optimal layer topology}$ is
	actually the optimal topology for Majority Algorithm.
However, we have not yet come up with any rigorous proof to verify our conjecture.
We will leave this as an open problem for future work.
\begin{cjt}
\label{cjt:open problems}
The layer topology provided in Theorem $\ref{thm:optimal layer topology}$ is
	the optimal topology for Majority Algorithm.
\end{cjt}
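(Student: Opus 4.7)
The plan is to prove Conjecture~\ref{cjt:open problems} by a canonicalization argument: show that for any topology $G$ on $n$ nodes running Majority Algorithm, $\mathcal{E}_G \ge \mathcal{E}(a_1^*,\dotsc,a_k^*)$, where $(a_1^*,\dotsc,a_k^*)$ is the optimal layer partition from Theorem~\ref{thm:optimal layer topology}. Combined with Theorem~\ref{thm:optimal layer topology} itself, this would establish global optimality. The first step is a structural reduction: for a graph $G$ with vertex ordering $1,\dotsc,n$, I would iteratively define a canonical layering by partitioning vertices according to the set of previous-node outputs they actually observe. I would then argue that without loss of generality one may relabel so that nodes sharing identical in-neighborhoods are contiguous in the ordering, since under Majority Algorithm such permutations leave the joint distribution of outputs unchanged.

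The second step, and the hardest, is an exchange argument. Take two nodes $i,j$ at the same canonical stage with differing in-neighborhoods, and modify the topology by replacing both neighborhoods with either their intersection or a canonical representative. I would show that this replacement weakly decreases $\mathcal{E}$ by conditioning on whether a cascade has already started and analyzing how the extra observed outputs influence each Majority decision. Iterating produces a graph with the layer property $S_1,\dotsc,S_k$ of Remark~\ref{rmk2}. The main obstacle is exactly this exchange step: unlike Section~3 there is no edge-independence to exploit, and unlike Section~4 the algorithm is fixed to Majority rather than optimized. Giving a single node extra observations can either trigger an incorrect cascade earlier or correctly swing its guess, so the argument seems to require a careful coupling between the runs on $G$ and on the modified graph, conditional on the private-signal realization in each stage, plus a case analysis separating ``pre-cascade'' and ``post-cascade'' regimes.

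A plausible alternative route, if the direct swap argument proves too delicate, is to leverage the non-constructive optimum of Algorithm~\ref{alg:greedy construction}: show that on any topology the greedy-optimal decision rules reduce to Majority precisely on a layer graph with structure matching Theorem~\ref{thm:optimal layer topology}, and then bound the gap between that greedy optimum and the Majority-restricted optimum on arbitrary graphs. Once reduction to a layer graph $(|S_1|,\dotsc,|S_k|)$ is in hand, Theorem~\ref{thm:optimal layer topology} directly identifies the optimal sizes $a_i \sim \log_s(s^{a_{i-1}}-a_{i-1})$, closing the argument. A final verification needed throughout is that the $O(1)$-per-layer slack tolerated in Remark~\ref{rmk:estimation} does not accumulate across the $\Theta(n/\log_s n)$ layers to more than $o(\log n)$, so the resulting bound remains $\Theta(\log n)$ and exactly matches Theorem~\ref{thm:general lower bound}.
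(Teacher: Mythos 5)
This statement is explicitly an open conjecture in the paper: the authors write that they ``have not yet come up with any rigorous proof to verify our conjecture'' and leave it as future work, so there is no paper proof to compare against. Your proposal is a proof \emph{plan}, not a proof, and its central step is exactly the part that remains open. The canonicalization/exchange argument requires showing that replacing two nodes' in-neighborhoods by an intersection or a canonical representative weakly decreases $\mathcal{E}$ under Majority Algorithm. But the whole point of the paper is that information sharing is non-monotone in this model: the complete graph ($q=1$) performs worse than the sparse random graph, so adding observed outputs to a node can strictly increase the expected number of wrong guesses by triggering a premature cascade, and removing them can do the same by leaving a node to its noisy private signal. You acknowledge this (``Giving a single node extra observations can either trigger an incorrect cascade earlier or correctly swing its guess''), but you do not supply the coupling or case analysis that would resolve it, and no obvious local exchange is monotone here. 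Until that lemma is actually proved, nothing has been established.

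Your fallback route via Algorithm~\ref{alg:greedy construction} has a similar gap: Lemma~\ref{lem:reveal or majority} shows the greedy-optimal rules on the complete graph are a reveal-or-majority threshold rule, not plain Majority, so the greedy optimum does not ``reduce to Majority'' on any graph in general, and bounding the gap between the algorithm-optimal $\mathcal{E}$ and the Majority-restricted $\mathcal{E}$ over arbitrary topologies is essentially the conjecture restated. Also note that even if a reduction to \emph{some} layer graph were achieved, Theorem~\ref{thm:optimal layer topology} only identifies the optimum within the layer family up to $O(1)$ per-layer slack (Remark~\ref{rmk:estimation}), which suffices for the $\Theta(\log n)$ asymptotics but not for exact optimality of the specific topology named in the conjecture. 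In short, the proposal correctly identifies where the difficulty lies but does not overcome it; the conjecture remains open.
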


\subsection {Experiments on the optimal parameters}

For layer graphs, Equation $\eqref{equ:optimal parameters}$ can be used to calculate the optimal parameters with high precision.
The analysis in Appendix $\ref{subsec:two-layer optimal}$ implies that the real optimal $k^{opt}$ must be constant away from our calculated $k^{opt'}$, i.e.
\begin{equation*}
	k^{opt} - k^{opt'}
=
	O(1).
\end{equation*}
Thus the results in Theorem $\ref{thm:optimal layer topology}$ is a very tight approximation, which works flawlessly
	if we only seek to analyze the complexity of $\mathcal{E}$.
However, in real life, users might wish to achieve the exact optimal parameters.
In this case, the constant error in our equations can not be neglected.
Here, we introduce an algorithm that searches for the exact optimal topology in average $O(1)$ run time.

Recall that layer graphs satisfy the following properties.
For a $k$-layer structure $(a_1, \dotsb, a_k)$,
	if the first layer cascades, the rest of the nodes follow this cascaded result.
Otherwise, the rest of the nodes become equivalent to a $(k-1)$-layer structure, with each layer's size being $(a_2,\dotsb,a_k)$ (or $(a_2+1,\dotsb,a_k)$ if $a_1$ is odd).
Therefore, for a fixed $a_1$, the optimal $k$ and $a_2, \dotsb, a_k$ should be chosen such that,
\begin{itemize}
\item If $a_1$ is even,
\begin{equation}
\label{equ:recur1}
	(k-1, a_2, \dotsb, a_k)
=
	\arg\max_{k', a'_1, \dotsb, a'_k}
		\mathcal{E}\Big(
			a'_1, \dotsb, a_k' \Big| \sum_{i=1}^{k'} a'_i = n - a_1
		\Big).
\end{equation}
\item If $a_1$ is odd,
\begin{equation}
\label{equ:recur2}
	(k-1, a_2 + 1, \dotsb, a_k)
=
	\arg\max_{k', a'_1, \dotsb, a'_k}
		\mathcal{E}\Big(
			a'_1, \dotsb, a_k' \Big| \sum_{i=1}^{k'} a'_i = n - a_1+ 1
		\Big).
\end{equation}
\end{itemize}
This implies that given the optimal layer topologies for all $n' < n$, calculating the optimal layer topology for $n$ should be easy.
We can simplify the problem into an optimization over $a_1$, instead of the optimization over many parameters.
A further analysis shows that the optimal layer topology for $n$ nodes and $n+1$ nodes cannot differ by too much.
More specifically, denote the optimizing parameter for $n$ nodes as $(k_n, a^n_1, \dotsb, a^n_{k_n})$, then $|a^n_1 - a^{n+1}_1| \le 1$.
Using this, together with the recursive idea in Equation $\ref{equ:recur1}$ and $\ref{equ:recur2}$, 	
	we can design an algorithm that runs in time $O(n)$, and find the optimal layer topology for all $n' \le n$.
The amortized running time of this algorithm is only $O(1)$.

\section{Conclusion}

In this paper, we discussed information cascades on various network topologies.
We provide a non-constructive optimal algorithm scheme for general graphs, solve the scheme for complete graph and achieve a general lower bound for our model.
We also studied Majority Algorithm in random graphs and layer graphs, the minimal $\mathcal{E}$ of which was shown to be asymptotically the same with our general lower bound.
From the experiment results, a gap between the general lower bound and layer graphs can be observed.
 We believe this to be a result of the difference in the model setting, i.e. Majority Algorithm is weaker than optimal general algorithms.

Future work in this area may include the study of the following scenarios.
\begin{itemize}
\item The nodes' order of decision-making is no longer fixed and given, but instead randomly sampled from all permutations.
\item The topology is fixed and we are only able to add or remove a fixed portion of the edges.
		The goal is to minimize $\mathcal{E}$ under this constraint.
\item -	\emph{Plant nodes} in the network. These nodes could sacrifice themselves to reveal their true private signal. How should a topology designer control and position these plant nodes in the topology?
%\item We are allowed to set a small number of \emph{plant nodes}
%	\footnote{Plant means an actor situated in the audience whose acting is rehearsed but seems spontaneous to the audience,
%	\emph{from TheFreeDictionary.com}} in the network.
%	These plant nodes are controlled completely by the topology designer, so they can be used to guide the information cascade.
\end{itemize}

%\nocite{Simpson}

%\begin{thebibliography}{99}
%
%\bibitem{easley2010networks}
%David Easley and Jon Kleinberg. Networks, Crowds and Markets: Reasoning about a Highly Connected World. {\em Cambridge University Press}, pp. 483-506.
%
%\bibitem{anderson1996classroom}
%Lisa R. Anderson and Charles A. Holt. Classroom games: Information cascades. {\em Journal of Economic Perspectives}, 10(4):~187-193, Fall 1996.
%
%\bibitem{anderson1997information}
%Lisa R. Anderson and Charles A. Holt. Information cascades in the laboratory. {\em American Economic Review}, 87(5):~847-862, December 1997.
%
%\bibitem{banerjee1992simple}
%Abhijit Banerjee. A simple model of herd behavior. {\em Quarterly Journal of Economics}, 107:~797-817, 1992.
%
%\bibitem {bikhchandani1992theory}
%Sushil Bikhchandani, David Hirshleifer, and Ivo Welch. A theory of fads, fashion, custom and cultural change as information cascades. {\em Journal of Political Economy}, 100:~992-1026, 1992.
%
%\bibitem {welch1992sequential}
%Ivo Welch. Sequential sales, learning and cascades. {\em Journal of Finance}, 47:~695-732, 1992.
%
%\end{thebibliography}
\bibliography{infocas}{}
\bibliographystyle{plain}

\clearpage
\appendix
\section{Proof on Random Network}

\subsection{Proof of Lemma \ref{lem:first logn/q deciding}}
\label{subsec:first logn/q deciding}
If among the first $\log n/q$ nodes,
	only a small constant fraction $f < 0.5$ output wrongly,
	then for the later $n - (\log n/q)$ nodes, the expected number of wrong outputs is at most $O(1)$.

\begin{proof}
We show that the probability of having any incorrect nodes among the later $n - (\log n/q)$ nodes is at most $O(1/n)$.
Therefore, the expected number of wrong outputs is at most $n\cdot O(1/n) = O(1)$.
Let $X_i$ $(i > \log n/q)$ denote the event that node $i$ outputs correctly.
Using Lemma \ref{lem:random exponential bound}, we can bound the conditional probability
\begin{equation*}
	\Pr\Big[
		X_{i+1}~\Big|~ X_{\log n/q}, \dotsb, X_{i}
	\Big]
\ge
	1 - e^{-iq(\sqrt{f} - \sqrt{1-f})^2}
=
	1 - n^{-\Theta(1)}.
\end{equation*}
Here, the exponential above $n$ is determined by $f$, $p$ and the base of the $\log()$ function.
By making the base of the $\log()$ function small enough, the bound can be set to $1 - O(n^{-c})$ for any positive constant $c$.
Let $c = 3$, the expected number of wrong outputs for the later $(n - \log n/q)$ nodes is upper bounded by
\begin{align*}
&
	0\cdot \Pr[X_{\log n/q}, \dotsb, X_n] + (n - \log n/q)(1 - \Pr[X_{\log n/q}, \dotsb, X_n]) \\
=~ &
	(n - \log n/q)(1 - \Pi_{i = n - \log n/q}^{n-1}\Pr[X_{i+1} | X_{\log n/q}, \dotsb, X_i]) \\
\le~ &
	(n - \log n/q)(1 - (1 - n^{-2})^n)
	\le
	O(1).
\end{align*}
\end{proof}

\subsection{Proof of Lemma \ref{lem:random exponential bound}}
\label{subsec:random exponential bound}

If a constant fraction $f > 0.5$ of the first $i$ nodes are correct,
	node $i+1$'s failure probability $\mathcal{E}$ can be bounded within $e^{-\Theta(iq)}$.

\begin{proof}
Recall that $\text{Bin}(n,q)$ denotes a binomial distribution with parameters $n$ and $q$,
	and that $\text{Bin}(fi, q)$ (resp. $\text{Bin}((1-f)i, q)$) can be used to denote the number of correct (resp. wrong) neighbors for node $i+1$.
By definition of Majority Algorithm, we have
\begin{equation*}
	1 - \mathcal{E}_{i+1}
\ge
	\Pr\Big[
		\text{Bin}(fi, q) \ge \text{Bin}((1-f)i, q) + 2
	\Big]
\sim
	\Pr \Big[
		\text{Bin}(fi, q) \le \text{Bin}((1-f)i, q)
	\Big ].
\end{equation*}
We create a symbol $X_j$ for each node $j \in [1, i]$, where
\begin{equation*}
\text{if node $j$'s decision is correct, }~~
	X_j =
	\left \{
	\begin{array} {c l}
		-1 & \text{with probability $q$.} \\
		0 & \text{with probability $1-q$.}
	\end{array}
	\right.
\end{equation*}
\begin{equation*}
\text{if node $j$'s decision is incorrect, }~~
	X_j =
	\left \{
	\begin{array} {c l}
		1 & \text{with probability $q$.} \\
		0 & \text{with probability $1-q$.}
	\end{array}
	\right.
\end{equation*}
By this, we transform the original problem into a deterministic form.
\begin{equation*}
	\Pr \Big[
		\text{Bin}(fi, q) \le \text{Bin}((1-f)i, q)
	\Big]
=
	\Pr\Big[
		\sum_{j=1}^i X_j \ge 0
	\Big].
\end{equation*}
Denote $X = \sum_{j=1}^i X_j$, then
\begin{align*}
	\Pr[X \ge 0]
& =
	\Pr[e^{X} \ge 1 ]
 	\le
	\min_{t>0} \E(e^{tX})
&
	(\text{by Markov Inequality}) \\
&=
	\min_{t>0}\prod_{j = 1}^{i}\E(e^{tX_j})
&
	(\text{$X_i$ are independent to each others}) \\
& =
	\min_{t > 0} (qe^{-t} + (1-q))^{fi}\cdot (qe^{t} + (1-q))^{(1-f)i}.
\end{align*}
If we apply the logarithmic function to both sides, then
\begin{equation}
\label{equ:exponential}
	\ln (\Pr[X \le 0])
\le
	i\min_{t>0} \Big( f \ln(qe^{-t} + (1-q)) + (1-f)\ln (qe^{t} + (1-q)) \Big).
\end{equation}
Using $\ln(1+x) < x$, we can relax Equation \eqref{equ:exponential} into simpler form.
\begin{equation}
\label{equ:takeback}
	\ln (\Pr[X \le 0])
\le
	i\min_{t>0} \Big( qf(e^{-t} - 1) + (1-f)q(e^{t} - 1) \Big).
\end{equation}
Let $t = \ln (f/(1-f)) / 2 > 0$, and take this value back to Equation \eqref{equ:takeback}.
We get
\begin{equation*}
	\ln (\Pr[X \le 0])
\le
	qi (2\sqrt{f(1-f)} - f - (1-f))
=
	-qi(\sqrt{f} - \sqrt{1-f})^2.
\end{equation*}
Thus the probability of node $i+1$'s failure is upper bounded by $e^{-qi(\sqrt{f} - \sqrt{1-f})^2}.$
\end{proof}

\subsection{Proof of \ref{lem:random general bound}}
\label{subsec:random general bound}

If a constant fraction $f > 0.5$ of the first $i$ nodes are correct, s.t.
$
	{f\over 1-f} \ge \sqrt{q\over 1-q},
$
then node $i+1$'s failure probability is upper bounded by $p$.

\begin{proof}
Denote random variable $X = \text{Bin}(fi, q) - \text{Bin}((1-f)i, q)$.
By definition of Majority Algorithm, node $i+1$ outputs a correct output with probability
\begin{equation*}
	1 - \mathcal{E}_{i+1}
=
	\Pr[ X \ge 2]
	+
	p\cdot \Pr[ 1 \ge X \ge -1].
\end{equation*}
If we can show that
$
	(1-p)\cdot\Pr[ X \ge 2 ]
	\ge
	p\cdot\Pr[ X \le -2 ],
$
then it follows that,
\begin{align*}
	{\Pr[\text{correct}]\over 1 - \Pr[\text{correct}]}
=
	{\Pr[ X \ge 2]+ p\cdot \Pr[ 1 \ge X \ge -1] \over \Pr[ X \le -2]+ (1-p)\cdot \Pr[ 1 \ge X \ge -1]}
\ge
	{p\over 1-p}
\implies
	\Pr[\text{correct}]
\ge
	p.
\end{align*}
To prove such an inequality, we extend the probability equations into a sum of specific cases,
\begin{align*}
	{\Pr \Big[ X \ge 2 \Big] \over \Pr\Big[ X \le -2 \Big]}
& =
	{\sum_{l_1 - l_2 \ge 2} \Big( \Pr\Big[ \text{Bin}(fi, q) = l_1 \Big]\cdot \Pr\Big[ \text{Bin}((1-f)i, q) = l_2 \Big] \Big)
	\over
	\sum_{l_1 - l_2 \ge 2} \Big( \Pr\Big[ \text{Bin}(fi, q) = l_2 \Big]\cdot \Pr\Big[ \text{Bin}((1-f)i, q) = l_1 \Big] \Big) }
\\
& =
	{\sum_{l_1 - l_2 \ge 2} \Big( {fi\choose l_1} q^{l_1}(1-q)^{fi - l_1}\cdot {(1-f)i\choose l_2} q^{l_2}(1-q)^{(1-f)i - l_2} \Big)
	\over
	\sum_{l_1 - l_2 \ge 2} \Big( {fi\choose l_2} q^{l_2}(1-q)^{fi - l_2}\cdot {(1-f)i\choose l_2} q^{l_1}(1-q)^{(1-f)i - l_1} \Big) }
\\
& =
	{\sum_{l_1 - l_2 \ge 2} {fi\choose l_1} \cdot {(1-f)i\choose l_2}
	\over
	\sum_{l_1 - l_2 \ge 2} {fi\choose l_2} \cdot {(1-f)i\choose l_1}  }
\\
& \ge
	\min_{l_1 - l_2\ge 2} \Big( {fi\choose l_1} \cdot {(1-f)i\choose l_2}\cdot {fi\choose l_2}^{-1} \cdot {(1-f)i\choose l_1}^{-1} \Big)
\\
& =
	\min_{l_1 - l_2\ge 2} {fi - l_2 \choose l_1 - l_2} \cdot {(1-f)i - l_2 \choose l_1 - l_2}^{-1} \\
& \ge
	({f\over 1-f})^{l_1 - l_2}
	\ge
	({f\over 1-f})^2
	\ge {p\over 1-p}.
\end{align*}
This completes our proof.
\end{proof}

\subsection {Proof of Lemma \ref{lem:majority among first logn/q}}
\label {subsec:majority among first logn/q}

Let $\delta = {1\over 2}(p + {\sqrt{p}\over \sqrt{p} + \sqrt{1-p}})$.
There exists connection probability $q_{opt} = \Theta(1/\log n)$ such that the first $\Theta(\log n/q)$ nodes
	contains at least $\delta$ portion of correct outputs with probability $1 - O(n^{-1}\log n)$.

\begin{proof}
Divide the first $\Theta(\log n/q)$ nodes into $\Theta(\log n)$ segments, where each segment contains $\Theta(1/q)$ many nodes.
We analyze each segment's number of wrong outputs independently and apply Union Bound to get a final bound.
The proof can be divided into two parts.
\begin{itemize}
\item We show that there exists $q_1 = \Theta(1/\log n)$ and some constant $a > 0$, such that for any $q < q_1$,
	the first $a/q$ nodes contain $\delta$ portion of correct outputs with probability at least $1 - O(n^{-1})$.
\item Given $a$ and $q_1$, we divide the first $\Theta(\log n / q)$ nodes into various segments.
	The first segment contains $a/q$ nodes while all later segments contain $b/q$ nodes, ($b$ will be determined later).
	Denote $X_i$ as the event that the $i^{th}$ segment contains $\delta$ portion of correct outputs.
	We prove that there exists $q_2 = \Theta(1/\log n) < q_1$ and some constant $b > 0$ such that
	\begin{equation*}
		\forall i, \Pr[X_{i}~|~X_1, \dotsb, X_{i-1}] = 1 - O(n^{-1}).
	\end{equation*}
\end{itemize}
These two claims form an induction analysis.
The first claim implies that the first segment are majority-correct with high probability.
And the second claim implies that if the first $i$ segments are majority-correct,
	then segment $i+1$ will also be majority-correct with high probability.
Together, they lower bound the probability that the first $\Theta(\log n / q)$ nodes contains $\delta$ portion of correct outputs by
\begin{equation*}
	\Pr[ \forall i, X_i]
=
	\prod_{i} \Pr[X_{i}~|~X_1, \dotsb, X_{i-1}]
=
	(1 - O(n^{-1}))^{\Theta(\log n)} = 1 - O(n^{-1}\log n),
\end{equation*}
which is exactly what we desire.
It suffices to prove the two claims.

Let us first address the performance of the first segment, i.e. prove the first claim.
Any node in the first segment expect to see at most $(a/q)*q = a$ neighbors.
When $a$ is small enough, the node will be separated with high probability.
Mathematically, the probability of a node having zero or one neighbor is bounded by
$
	(1-q)^{a/q} + {a/q \choose 1}q(1-q)^{a/q - 1}
\sim
	e^{-a}.
$
Therefore, any node in the first segment has at least $pe^{-a}$ probability of outputting correctly.
Since $\delta < p$, we can set $a$ such that
$
	a < \ln q - \ln \delta
	\implies
	pe^{-a} > \delta.
$
For simplicity, we denote $pe^{-a}(1-\epsilon) = \delta$, where $\epsilon$ is some positive constants.
Notice that in the above analysis, the correlations between nodes are ignored.
Any node in the first segment will have probability $pe^{-a}$ of being correct despite the performance of other nodes.
Therefore, we can use Chernoff Bound to show that the first segment
	contains $\delta$ portion of correct outputs with probability at least $e^{-\epsilon^2a/(3q)}$.
Set $q < q_1 = \epsilon^2a / (3\ln n)$, then a bound of $1 - O(n^{-1})$ can be achieved.
This completes the proof of the first claim.

We now turn to the second claim.
Suppose all of the first $i$ segments have at least $\delta$ portion of correct outputs,
	then for any node in segment $i+1$, there are at least
\begin{equation*}
	{\#(\text{correct outputs in the first $i$ segments}) \over \#(\text{nodes in the first $i+1$ segments})}
\ge
	{\delta(a + b(i-1)) \over a + bi }
\ge
	{\delta a\over a + b}
\end{equation*}
fraction of correct outputs in front.
Since $\delta > \sqrt{p}/(\sqrt{p}+\sqrt{1-p})$, there exists constant $b > 0$, such that
$
	{\delta a\over a + b}
\ge
	{\sqrt{p}\over \sqrt{p}+\sqrt{1-p}}.
$
By Lemma $\ref{lem:random general bound}$, any node in segment $i+1$ has at least probability $p$ of outputting a correct output.
Again, this analysis is independent of the correlation between nodes.
Thus Chernoff Bound can be applied to bound the probability.
This completes the proof of the second claim.
\end{proof}

\subsection{Proof of Lemma \ref{lem:loss of former nodes}}
\label{subsec:loss of former nodes}
If $q = q_{opt}$, the first $\Theta(\log n/q)$ nodes' expected number of wrong outputs is $\Theta(\log n)$.

\begin{proof}
By linearity of expectation, the expected number of wrong outputs among the first $\Theta(\log n/q)$ nodes
	is the sum of the failure probability for each node.
Denote $Y$ as the event that all segments have at least $\delta$ portion of correct outputs.
If $Y$ happens, then as argued in Appendix \ref{subsec:majority among first logn/q},
	each node will have at least $\delta' = {\sqrt{p}/(\sqrt{p}+\sqrt{1-p})}$ portion of correct nodes in front.
By Lemma \ref{lem:random exponential bound}, the loss of the first $\Theta(\log n/q)$ nodes conditioned on $Y$ is upper bounded by
\begin{align*}
	\sum_{i=1}^{\Theta(\log n/q)} (\mathcal{E}_i|Y)
& =
	\sum_{i=1}^{a/q} (\mathcal{E}_i|Y) + \sum_{i=a/q+1}^{\Theta(\log n/q)} (\mathcal{E}_i|Y)
	\le
	{a \over q} + \sum_{i=1}^{\Theta(\log n/q)} e^{-qi(\sqrt{\delta'} - \sqrt{1 - \delta'})^2} \\
& \le
	{a \over q} + { 1 \over 1 - e^{-q(\sqrt{\delta'} - \sqrt{1 - \delta'})^2} }
	\le
	{a \over q} + {1 \over q(\sqrt{\delta'} - \sqrt{1 - \delta'})^2 } \\
& =
	\Theta ({1\over q})
	=
	\Theta(\log n).
\end{align*}
Therefore, the expected number of wrong outputs among the first $\Theta(\log n/q)$ nodes is
\begin{align*}
	\sum_{i=1}^{\Theta(\log n/q)} \mathcal{E}_i
& =
	\Pr[Y]\cdot \sum_{i=1}^{\Theta(\log n/q)} (\mathcal{E}_i|Y) + \Pr[\overline Y]\cdot \sum_{i=1}^{\Theta(\log n/q)} (\mathcal{E}_i|\overline Y) \\
& \le
	\sum_{i=1}^{\Theta(\log n/q)} (\mathcal{E}_i|Y) +  \Theta({\log n \over n})\cdot \Theta({\log n \over q})
	\le
	\Theta(\log n).
\end{align*}
This completes our proof.
\end{proof}

\section{Proof for General Lower Bound}

\subsection{Proof of Lemma \ref{lem:selfMaj}}
\label{subsec:selfMaj}

For a node to minimize its failure probability,
	the best algorithm is to perform a Majority Algorithm on
	$\{$all previous \textbf{valid} choices, the node's private signal $\}$.

\begin{proof}
Given previous outputs sequence $c^{i-1}$ and a private signal $s_i$,
	suppose vector $(c^{i-1}, s_i)$ contains $n_0$ valid choices of 0 and $n_1$ valid choices of 1.
We calculate and compare the Bayesian probability for the ground truth value $b$.
The node should output 1 if and only if the Bayesian probability of $b=1$ is higher than $b=0$.
It will be shown that
\begin{equation*}
	\Pr[b = 0 | c^{i-1},s_i] \ge \Pr[b = 1|c^{i-1},s_i]
	~\text{ if and only if }~
	n_0 \ge n_1,
\end{equation*}
thus proving the lemma.
To compare the two conditional probability, calculate
\begin{align*}
	{\Pr[b = 0~|~c_1, \dotsb, c_{i - 1}, s_j ]
	\over
	\Pr[b = 1~|~c_1, \dotsb, c_{i - 1}, s_j ]}
= &
	~~{\Pr[c_1, \dotsb, c_{i - 1}, s_j ~|~ b = 0]
	\over
	\Pr[c_1, \dotsb, c_{i - 1}, s_j ~|~ b = 1]}
	~~
	(\text{By Bayesian Theorem})
\\
= &
	~~{\Pr[s_i|b = 0] \over  \Pr[s_i|b = 1]}
	\cdot
		{\prod_{j = 1 \to i-1} \Pr[c_j|c_1, \dotsb, c_{j - 1}, b = 0]
		\over
		\prod_{j = 1 \to i-1} \Pr[c_j|c_1, \dotsb, c_{j - 1}, b = 1]},
\end{align*}
where the last equality holds by conditional expectation.
If $(c_1, \dotsb, c_{j - 1})$ is not in the reveal set of node $j$,
	then by definition, $c_j$ would be independent of node $j$'s signal.
If $(c_1, \dotsb, c_{j - 1})$ is in the reveal set, $c_j$ would be the same as node $j$'s signal.
Therefore,
\begin{equation*}
	\Pr[c_j|c_1, \dotsb, c_{j - 1}, b] =
	\left\{
	\begin{array} {c l}
	1 & \text{ if $c_j$ is not valid } \\
	p\cdot \I[c_j = b] + (1-p)\cdot \I[c_j\neq b] & \text{ if $c_j$ is valid } \\
	\end{array}
	\right..
\end{equation*}
Plot this into the above equation, we get	
\begin{align*}
	{\Pr[b = 0~|~c_1, \dotsb, c_{i - 1}, s_j ]
	\over
	 \Pr[b = 1~|~c_1, \dotsb, c_{i - 1}, s_j ] }
=
	{\prod_{j\text{ is valid}} \Pr[s_j~|~ b = 0]
	\over
	\prod_{j\text{ is valid}} \Pr[s_j~|~ b = 1]}
=
	{ p^{n_0}(1-p)^{n_1} \over p^{n_1}(1-p)^{n_0} }
=
	({p \over  1 - p})^{n_0 - n_1}.
\end{align*}
which is less than 1 if and only if $n_0 < n_1$ and completes the proof.
\end{proof}

\subsection{Proof of Theorem \ref{local implies overall}}
\label{subsec:local implies overall}

$L_1, L_2, \dots, L_n$ constructed as in Section \ref{Construction} minimizes $\mathcal{E}(L_1, L_2, \dots, L_n)$ for complete graphs.

\begin{proof}
Assume the contrary that there exists a set of algorithm $L_1', L_2', \dots, L_n'$ s.t.
\begin{equation*}
	\mathcal{E}(L_1', L_2', \dots, L_n')
	>
	\mathcal{E}(L_1, L_2, \dots, L_n).
\end{equation*}
We denote $k$ to be the largest number in $\{L'_i\}$ where the construction requirement is not met,
\begin{equation*}
	k
\gets
	\sup \Big\{
		k
		~\Big|~
		\exists c^{k-1} \in \{0,1\}^{k-1}, s_k\in \{0,1\}
		\text{ ~s.t.~}
		L_k'(c^{k-1}, s_k) \not= GC(L_1', L_2', \dots, L_{k-1}')(c^{k-1}, s_k)
	\Big\}.
\end{equation*}
Consider a new sequence of algorithms, were
\begin{equation*}
	L_i''
	=
	\left\{
		\begin{array}{l l}
			L_i', & i=1,2,\dots,k-1  \\
			GC(L_1'', L_2'', \dots, L_{i-1}''), & i=k,k+1,\dots,n
		\end{array}
	\right..
\end{equation*}
We then show that $L'$ performs no better than $L''$, i.e.
$
	\mathcal{E}(L_1', L_2', \dots, L_n')
	\leq
	\mathcal{E}(L_1'', L_2'', \dots, L_n'').
$
Minus the two items and consider their difference.
\begin{align*}
	\Delta W
& =
	\mathcal{E}(L_1'', L_2'', \dots, L_n'') - \mathcal{E}(L_1', L_2', \dots, L_n')
\\
& =
	\mathbb E_{s_1, \dots, s_n} \Big[
		\sum_{i=1}^n I_{c_i''=b} - \sum_{i=1}^n I_{c_i'=b}
		~\Big|~
		c_i' \gets L_i'(c'^{i-1}, s_i),
		c_i'' \gets L_i''(c''^{i-1}, s_i), \forall i \in [n]
	\Big].
\end{align*}
As $L_i'=L_i''$ for $i=1,2,\dots,k-1$, we can discard the first $k$ items in the summation and get,
\begin{equation*}
	\Delta \mathcal{E}
=
	\sum_{i=k}^n \Pr_{s_1, s_2, \dots, s_n} [c_i'' = b|c''^{i-1}] -
	\sum_{i=k}^n \Pr_{s_1, s_2, \dots, s_n} [c_i' = b|c'^{i-1}].
\end{equation*}
By the definition, we know that greedy construction maximize $\sum_{i=k}^n \Pr_{s_1, s_2, \dots, s_n} [c_i = b|c^{i-1}]$.
Thus $\Delta SW \geq 0$.
Define the $degree$ of a sequence of algorithms $L_1, L_2, \dots, L_n$ to be
\begin{multline}
	degree(l')
=
	\max \Big\{ k
		~ \Big| ~
			\exists c^{k-1} \in \{0,1\}^{k-1}, s_k \in \{0,1\} \text{ s.t. }
			\\
			L_k'(c^{k-1}, s_k) \not= GC(L_1', L_2', \dots, L_{k-1}')(c^{k-1}, s_k)
		\Big\},
\end{multline}
then for any algorithm sequence $l_1, l_2, \dots, l_n$ with $degree(l)>0$, we can always find another sequence $(l_1', l_2', \dots, l_n')$ such that
	$degree(l') \leq degree(l) - 1$ and $\mathcal{E}(l')\geq \mathcal{E}(l)$.
Therefore, there must exist a sequence $L_1^*, L_2^*, \dots, L_n^*$ such that
\begin{equation*}
\mathcal{E}(L_1^*, L_2^*, \dots, L_n^*) \geq \mathcal{E}(L_1', L_2', \dots, L_n')
~~\&\&~~
degree(L_1^*, L_2^*, \dots, L_n^*) = 0.
\end{equation*}
This implies that $L_i^*$ satisfies all the constraints in the non-constructive scheme, thus completing our proof.
\end{proof}

\subsection {Proof of Lemma \ref{lem:reveal or majority}}
\label{subsec:reveal or majority}
In the optimal algorithm, a node either reveals \textbf{valid} information or apply Majority Algorithm on all previous outputs, i.e.
\begin{equation*}
L_i
= \left\{
	\begin{array}{c l}
		s_i & c^{i-1} \in RS_i \\
 		\Maj(c^{i-1}) & c^{i-1} \not \in RS_i
 	\end{array}
\right..
\end{equation*}

\begin{proof}
We organize the proof into proving several sub-results,
	where each sub-result leads us one  step closer towards the lemma.

(1) Only valid choices can influence later nodes.

We first show that for any $i$, node $i+1, i+2, \dots, n$ does \textbf{not} care about dimensions of $c^{i}$ outside $\text{Valid}(c^{i})$.
The proof is based on induction.
We start with node $n$ and gradually move forward to prove this claim for all $i\in [n]$.
In the greedy scheme, node $n$ uses $L_n$ to minimize its own failure probability $\mathcal{E}_n$.
By Lemma $\ref{lem:selfMaj}$, node $n$ performs a Majority Algorithm on $\text{Valid}(c^{n-1})$ and its own signal $s_n$.
Therefore, this claim holds for node $n$.

Suppose the claim holds for node $i+1, i+2,\dotsb, n$.
Then node $i$ either reveals its own signal, which is completely independent of $c^{i-1}$, or be ignored by later nodes (by the induction assumption).
In the later case, node $i$ will try to minimize its own failure probability by performing a Majority Algorithm on $\text{Valid}(c^{i-1})$ and $s_i$.
Therefore, the induction proof is completed.
With this claim, we can represent the optimal algorithms as:
\begin{equation*}
L_i
= \left\{
	\begin{array}{c l}
		s_i & c^{i-1} \in RS_i \\
 		\Maj[\text{Valid}(c^{i-1}), s_i] & c^{i-1} \not\in RS_i
 	\end{array}
	\right..
\end{equation*}

(2) The private signal can be deprived from the Majority function.

Assume the contrary that under some previous decision vector $c^{i-1}$ not in reveal set $RS_i$, node $i$'s signal $s_i$ affects $c_i$.
Then we have $c_i = L_i(c^{i-1},s_i) = s_i$.
But according to the definition, the input $c^{i-1}$ should belong to $RS_i$, which contradicts to our assumption..
Therefore, $L_i$ can be written in the form
\begin{equation*}
L_i
= \left\{
	\begin{array}{c l}
		s_i & c^{i-1} \in RS_i \\
 		\Maj[\text{Valid}(c^{i-1})] & c^{i-1} \not\in RS_i
 	\end{array}
	\right..
\end{equation*}

(3) If a node performs Majority Algorithm, all later nodes will also perform Majority Algorithm.

The intuition here is that the act of revealing one's own signal is beneficial to all later nodes.
Therefore, such ``benefits'' should be placed at the very front so that the positive influence is maximized.
We now consider a proof by contradiction.
Assume the contrary, then there must exist circumstances where node $i$ performs majority, but node $i + 1$ reveals his own signal.
In this case, we can modify the algorithm to let node $i$ reveal its signal and node $i+1$ perform majority.
It can be shown that the $\mathcal{E}$ after this modification will be no worse than the original one.
So there must exist algorithms $(L_1, \dots, L_n)$ with minimal $\mathcal{E}$ that satisfies this claim.
In such optimal algorithms, $\Maj[\text{Valid}(c^{i-1})]$ would be equivalent to $\Maj[c^{i-1}]$
	and $L_i$ may therefore be simplified to
\begin{equation*}
L_i
= \left\{
	\begin{array}{c l}
		s_i & c^{i-1} \in RS_i \\
 		\Maj[c^{i-1}] & c^{i-1} \not\in RS_i
 	\end{array}
	\right..
\end{equation*}
This completes our proof.
\end{proof}

\subsection{Efficient algorithms to derive $\delta_n(\cdot)$ for any $n$}
\label{subsec:algorithm}
Here, we provide algorithms that efficiently derive $\delta_n(i)$ in average $O(\log n)$ time.
Denote $\mathcal{E}(i,d)$ to be the expected number of wrong nodes given that $|\diff(c^{i-1})| = d$.
The idea is to use recursion to derive $\mathcal{E}(i,d)$ for all $i$ and $d$, in the process of which all $\delta_n(i)$ may be calculated.
First of all, it is already known that node $n$ outputs the majority of former outputs.
Thus $\mathcal{E}(n,d)$ may be efficiently calculated for all $d$.
Secondly, we show how to calculate $\{\mathcal{E}(k,d)|d\}$ given $\{\mathcal{E}(i,d)~|~d, i > k\}$.
\begin{itemize}
\item Given $|\diff(c^{i-1})| = d$, we can calculate the Bayesian probability of the nature bit as in Equation $\ref{equ:bayesian}$.
\item Node $k$ calculates the expected number of wrong nodes when it chooses to reveal private signal or do majority.
	This calculation is in $O(1)$ given $\mathcal{E}(k+1,d+1)$ and $\mathcal{E}(k+1,d-1)$.
	$\mathcal{E}(k,d)$ is set to be the smaller of the two calculated results.
\item Node $k$ performs Majority Algorithm under difference $d$ if and only if the $\mathcal{E}$ calculated for Majority Algorithm is smaller than that for revealing private signal.
\end{itemize}
In the second step, if node $k$ chooses to reveal its private signal, $\mathcal{E}(k, d)$ is updated as
\begin{equation}
\label{equ:revealupdate}
\begin{aligned}
	\mathcal{E}(k,d)
=
 	q_1\cdot \mathcal{E}(k+1, d+1) + (1-q_1)\cdot \mathcal{E}(k+1, d-1),
\end{aligned}
\end{equation}
where $q_1$ is the probability that node $k$'s private signal matches the majority of former outputs.
Similarly, if node $k$ chooses to do Majority Algorithm, $\mathcal{E}(k, d)$ is updated as
\begin{equation}
\label{equ:majorityupdate}
	\mathcal{E}(k,d)
=
	q_2\cdot (n - {k+d\over 2}) + (1-q_2)\cdot {k+d\over 2},
\end{equation}
where $q_2$ is the probability that the majority of former outputs is correct.
After running through all $k\in \{1, \dots, n\}$, we can find $\delta_n(i)$ as
\begin{equation}
\label{equ:deltaupdate}
	\delta_n(i)
	 =
	 \max_{d} \Big( \text{$L_i$ reveals private signal under input difference $d$} \Big).
\end{equation}
A detailed pseudo-code is presented as in \ref{alg:n3alg}.
\begin{algorithm}[htb]
\caption{$O(n^2)$ algorithm for finding $\{\delta_n(i)|i\}$} \label{alg:n3alg}
\begin{algorithmic}[1]
\hide{
\REQUIRE ~~\\
	The set of positive samples for current batch, $P_n$;\\
	The set of unlabeled samples for current batch, $U_n$;\\
	Ensemble of classifiers on former batches, $E_{n-1}$;\\
\ENSURE ~~\\
	Ensemble of classifiers on the current batch, $E_n$;
	}
\STATE Calculate $\mathcal{E}(n,d)$ for all $i\in [n]$ using Equation ($\ref{equ:majorityupdate}$).
\FOR {$i = n-1 \text{ to } 1$, $d = 1\to \text{ to } i$}
	\STATE Update $\mathcal{E}(i,d)$ with the smaller one of Equation ($\ref{equ:revealupdate}$) and Equation ($\ref{equ:majorityupdate}$);
	\STATE Let node $i$ reveals if and only if Equation ($\ref{equ:revealupdate}$) is larger than Equation ($\ref{equ:majorityupdate}$);
\ENDFOR
\FOR {$i = 1 \text{ to } n$}
	\STATE Update $\delta_n(i)$ using Equation ($\ref{equ:deltaupdate}$).
\ENDFOR
\RETURN the $\delta_n(\cdot)$ function;

\end{algorithmic}
\end{algorithm}
\\
Algorithm $\ref{alg:n3alg}$ runs in time $O(n^2)$, which is polynomial yet still improvable.
In Algorithm $\ref{alg:n3alg}$, we make no use of the properties of $\delta_n()$,
	some of which may be especially useful.
For example, we can show that for any $n$ and $i\in[n]$,
\begin{equation}
\label{equ:bound3}
\delta_n(i+1)-1\le \delta_n(i)\le \delta_n(i+1)+1.
\end{equation}
Equation $\ref{equ:bound3}$ can be proved through induction.
However, it is easier to understand it intuitively.
Two consecutive nodes own almost the same set of data and thus should share similar criteria.
This is why $\delta_n(i)$ and $\delta_n(i+1)$ should be close to each other.
By applying Equation $\eqref{equ:bound3}$ back to the algorithms,
	we can simply test the difference $d = \delta_n(i+1)+1$ for $L_i$,
	instead of testing all possible difference $d$.
The detailed pseudo-code for the new algorithm is presented as below.
\\
\begin{algorithm}[htb]
\caption{Calculate-$\mathcal{E}$}
\begin{algorithmic}[1]
\REQUIRE ~~\\
	The node's position, $i$;\\
	The difference witnessed in former outputs, $d$;
\ENSURE ~~\\
	$d < i$;
\IF {$\delta_n(i)$ has already been calculated}
	\IF {$d > \delta_n(i)$}
		\STATE Calculate $\mathcal{E}(i,d)$ using Equation \ref{equ:revealupdate}.
		\STATE If $\mathcal{E}(i+1, d+1)$ or $\mathcal{E}(i+1, d-1)$ hasn't been calculated before, recursively apply Calculate-$\mathcal{E}$();
	\ELSE
		\STATE Calculate $\mathcal{E}(i,d)$ using Equation \ref{equ:majorityupdate};
	\ENDIF
	\RETURN True (it doesn't matter what we return here);
\ELSE
	\STATE Update $\mathcal{E}(i,d)$ with the smaller one of Equation \ref{equ:revealupdate} and Equation \ref{equ:majorityupdate};
	\IF {Equation \ref{equ:revealupdate} is larger than Equation \ref{equ:majorityupdate}}
		\RETURN True;
	\ELSE
		\RETURN False;
	\ENDIF
\ENDIF

\end{algorithmic}
\end{algorithm}
\begin{algorithm}[htb]
\caption{A $O(n\log n)$ algorithm for finding $\{\delta_n(i)|i\}$} \label{alg:nlognalg}
\begin{algorithmic}[1]
\STATE Set $\delta_n(n) = n~ \% ~2$.
\FOR {$i = n-1 \text{ to } 1$}
	\STATE Call function Calculate-SW$(i,\delta_n(i+1)+1)$;
	\STATE If the output is true, set $\delta_n(i) = \delta_n(i+1)+1$; otherwise, set $\delta_n(i) = \delta_n(i+1)-1$.
\ENDFOR
\RETURN the $\delta_n(\cdot)$ function;
\end{algorithmic}
\end{algorithm}

In Algorithm \ref{alg:nlognalg}, we only need to calculate $\{\mathcal{E}(i,d) ~|~ d \le \delta_n(i) + 1\}$.
Since $\delta_n(0) = O(\log(n))$, only $O(n\log n)$ calls to Calculate-$\mathcal{E}$() is needed.
And if we perform some $O(n)$-time preprocessing, each call to Calculate-$\mathcal{E}$() can be finished within $O(1)$ time.
Therefore, Algorithm $\ref{alg:nlognalg}$ finishes in $O(n\log n)$ time.
Another thing learnt from this algorithm is that, $\delta_n(i)$ depends only on the number of nodes left, which is $n-i$.
So we can rewrite $\delta(i)$ as an universal function $\delta(\cdot)$ independent of $n$, where $\delta_n(i) = \delta(n-i)$.
Thus the output of Algorithm \ref{alg:nlognalg} not only determines the optimal parameters for $n$ nodes,
	but also implies the optimal parameters for all $m$-sized graph, where $m<n$.
This property would be highly useful in real life, where the number of nodes is flexible but controlled within a certain range.

\section{On the optimal topology of layer graphs}

\subsection{Optimality analysis for two-layer graphs} \label{subsec:two-layer optimal}
For the optimal topology among two layer graphs, its first layer has size $\log_{s}(n)-\log_s( \log_s n)/2 + O(1)$.
\begin{proof}
To solve for the optimal topology, we perform an optimization over Equation $\ref{equ:equ1}$
\begin{equation*}
	f(a_1) = \arg\min_{a_1} \Big( (1-p)\cdot a_1 + p_w(a_1)\cdot (n - a_1) \Big),
\end{equation*}
where $a_1$ is used to denote the size of the first layer.
For simplicity, we assume $a_1$ to be odd and is consequently denoted as $2k+1$.
Such relaxation results in only $O(1)$ error in the optimized $a_1$, and by Remark $\ref{rmk:estimation}$, can be allowed.
By definition, $p_w(2k+1)$ can be extended into the following form:
\begin{equation*}
	p_w(2k+1)
=
	\sum_{i=0}^{k-1} \binom{2k+1}{i} (1-p)^{2k+1-i} p^i.
\end{equation*}
Optimization over summation is usually hard.
So we first estimate variables such as $p_w(2k+1)$ and relax the equation with suitable operation.
By Chernoff Bound, we have,
\begin{equation*}
	1 - p_w(2k+1)
\geq
	1 - e^{-2(p(2k+1)-k+1)^2/(2k+1)}
\geq
 	1 - e^{-(2p-1)\cdot 2k}.
\end{equation*}
Also, when $n$ is large, by Stirling Approximation, we have
\begin{equation*}
	\binom{2k+1}{k}p^k(1-p)^k
\sim
	\frac{2k+1}{k+1} \cdot \frac{[4p(1-p)]^k}{\sqrt{\pi \cdot 2k}}
\sim
	0.
\end{equation*}
We now consider the equation $f(2k+3) - f(2k+1) = 0$.
If this equation can be shown to have unique solution,
	then that very solution would be the unique extreme point of $f(\cdot)$.
Subtract the definition of $f(2k+3)$ and $f(2k+1)$,
\begin{equation}
\label{equ:original}
 	f(2k+3) - f(2k+1)
=	
	2 - 2p + 2p_w(2k+1) + (n - 2k - 3)(p_w(2k+1) - p_w(2k+3)).
\end{equation}
By subtracting $p_w(2k+3)$ and $p_w(2k+1)$, we can get
\begin{equation}
\label{equ:subtract}
\begin{aligned}
& ~~
	p_w(2k+1) - p_w(2k+3)
\\
= &~~
	\binom{2k+1}{k+1}p^k(1-p)^{k+1}(1-p^2)
	+\binom{2k+1}{k} p^{k+1}(1-p)^k(1-p)^2
	-\binom{2k+1}{k+1} p^k(1-p)^{k+1}.
\end{aligned}
\end{equation}
Take Equation $\ref{equ:subtract}$ back into Equation $\ref{equ:original}$, we can achieve a further simplified equation that
\begin{equation*}
	2(1 - p)
 =
	2 p_w(2k+1) +
	(n-2k-3)\frac{2k+1}{k+1}\binom{2k}{k}\left[p(1-p)\right]^{k+1}(2p-1).
\end{equation*}
By Stirling Approximation, we have,
\begin{equation*}
	\binom{2k}{k}
=
	\frac{(2k)!}{k!k!}
=
	\frac{\lambda_{2k}}{\lambda_k^2}\cdot \frac{\sqrt{2\pi \cdot 2k}(2k)^{2k}}{2\pi k \cdot k^{2k}}
=
	\frac{\lambda_{2k}}{\lambda_k^2}\cdot \frac{2^{2k}}{\sqrt{\pi k}},
\end{equation*}
where $\lambda$ are the stirling coefficients such that,
\begin{equation*}
	e^{-\frac{19}{150}}
<
	\frac{e^\frac{1}{24k+1}}{e^\frac{1}{6k}}
\leq
	\frac{\lambda_{2k}}{\lambda_k^2}
\leq
	\frac{e^\frac{1}{24k}}{e^\frac{2}{12k+1}}
<
	1.
\end{equation*}
Let $c_0 = \lambda_{2k} / \lambda_{k^2}$, then prior equations imply that
\begin{equation}
\label{equ:jun}
	\frac{2(1-p)-2p_w(2k+1)}{c_0p(1-p)(2p-1)(2-\frac{1}{k+1})}
=
	(n-2k-3)[4p(1-p)]^k \frac{1}{\sqrt{\pi k}}.
\end{equation}
We denote the left side as $L(k)$ and right side as $R(k)$. Since
\begin{equation*}
	\lim_{k\to +\infty} p_w(2k+1)
=
	0.
\end{equation*}
So there exists a threshold $K_0$ s.t. $ \forall k > K_0 $ we have $p_w(2k+1) < (1-p)/2$.
Thus when $k$ is large, we have 2 constant bounds $ C_1 $, $ C_2 $ s.t.
\begin{equation}
\label{equ:bound of equation}
	C_1
~=~
	\frac{1-p}{p(1-p)(2p-1)\cdot 2}
~<~
	L(k)
~<~
	\frac{2(1-p)}{p(1-p)(2p-1)\frac{3}{2}\cdot e^{-\frac{19}{150}}}
=
	C_2.
\end{equation}
This bounds the left hand size of $\ref{equ:jun}$ into constant range.
Now consider the right hand size $R(k)$.
It can be noticed that:
\begin{itemize}
\item $ n-2k-3 $ strictly decrease with $ k $.
\item $ p(1-p) < \frac{1}{4} \Rightarrow 4p(1-p)<1 $, so $ \left[4p(1-p)\right]^k $ strictly decrease with $ k $.
\item $ {1}\Big/{\sqrt{\pi k}} $ decrease with $ k $.
\item $R(k)$ is the multiplication of the three items above.
\end{itemize}
Therefore, $R(k)$ strictly decreases with $k$.
By its monotonicity, we can have an inverted function $ R^{-1}(\cdot)$ to bound the value of $k$ using $C_1$ and $C_2$,
	such that $R^{-1}(C_2) < k < R^{-1}(C_1)$.
Suppose $R(k)=C$ for some C, then we can derive
\begin{equation*}
	k
=
	\frac{\log(n-2k-3)-\frac{1}{2}\log(\pi k) - \log C}{-\log(4p(1-p))}
=
	\log_{s}(n)-{\log_s( \log_s n)\over 2} + O(1).
\end{equation*}
This completes our proof.
\end{proof}

\subsection {Optimality analysis for general layer graphs}
\label {subsec:optimality all layers}

To solve for the optimal layer topology, we need to find the optimization of Equation $\ref{equ:recursive}$,
\begin{equation*}
\mathcal{E}(a_1, \dotsb, a_k)
=
(1-p)\cdot a_1 + p_w(a_1)\cdot (n - a_1) + p_n(a_1)\cdot \mathcal{E}(a_2, \dotsb, a_k),
\end{equation*}
For simplicity, denote $g(k, n)$ as the optimal loss of a layer graph whose nodes' number is $n$ and whose first layer has size $k$.
An equivalent version of Equation $\ref{equ:recursive}$ is
\begin{equation*}
g(k, n) = (1-p)k + p_w(k)(n-k) + p_n(k)\min_{k'}f(k',n-k).
\end{equation*}
We already know that the optimal loss of layer graphs can be tightly approximated by $c\log n + o(\log n)$ for some constant $c$.
Therefore, if we apply similar method when finding the optimal two-layer topology and subtract $g(2k+1, n)$ from $g(2k+3, n)$,
	we can get
\begin{align*}
	g(2k+3,n) - g(2k+1,n)
& =
	f(2k+3) - f(2k+1) + c\log {n - 2k - 3 \over n - 2k - 1}\cdot p_n(2k+1) ) \\
& =
	f(2k+3) - f(2k+1) + O( \log (1 - {2 \over n - 2k - 1})\cdot (4p(1-p))^{k}) \\
& =
	f(2k+3) - f(2k+1) + O(n^{-2}),
\end{align*}
where $f$ follows from the definition in Equation $\ref{equ:equ1}$.
This implies that the optimization of the first layer's size in general layer graphs can be closely approximated by
	the first layer size in the optimal two-layer graph.
Therefore, $p_n(a_1)\mathcal{E}(a_2, \dotsb, a_k)$ can be safely discarded from Equation $\ref{equ:recursive}$.
And in the optimal topology, the first layer should also have $\log_{s} n - o(\log n)$ nodes.

Using similar method, the second layer should have size $\log_s ( n - \log_s n) \sim \log_s n$.
We can apply this argument repeatedly, which eventually leads to the results in Theorem $\ref{thm:optimal layer topology}$.

%%
%% Bibliography
%%

%% Either use bibtex (recommended), but commented out in this sample

%\bibliography{dummybib}

%% .. or use bibitems explicitely

%\nocite{Simpson}
%
%\begin{thebibliography}{99}
%
%\bibitem{easley2010networks}
%David Easley and Jon Kleinberg. Networks, Crowds and Markets: Reasoning about a Highly Connected World. {\em Cambridge University Press}, pp. 483-506.
%
%\bibitem{anderson1996classroom}
%Lisa R. Anderson and Charles A. Holt. Classroom games: Information cascades. {\em Journal of Economic Perspectives}, 10(4):~187-193, Fall 1996.
%
%\bibitem{anderson1997information}
%Lisa R. Anderson and Charles A. Holt. Information cascades in the laboratory. {\em American Economic Review}, 87(5):~847-862, December 1997.
%
%\bibitem{banerjee1992simple}
%Abhijit Banerjee. A simple model of herd behavior. {\em Quarterly Journal of Economics}, 107:~797-817, 1992.
%
%\bibitem {bikhchandani1992theory}
%Sushil Bikhchandani, David Hirshleifer, and Ivo Welch. A theory of fads, fashion, custom and cultural change as information cascades. {\em Journal of Political Economy}, 100:~992-1026, 1992.
%
%\bibitem {welch1992sequential}
%Ivo Welch. Sequential sales, learning and cascades. {\em Journal of Finance}, 47:~695-732, 1992.
%
%\end{thebibliography}
%\bibliography{infocas}{}
%\bibliographystyle{plain}

\end{document}